\documentclass[preprint,11pt]{elsarticle}
\makeatletter
\def\ps@pprintTitle{%
 \let\@oddhead\@empty
 \let\@evenhead\@empty
 \def\@oddfoot{}%
 \let\@evenfoot\@oddfoot}
\makeatother

\usepackage[titletoc,title]{appendix}

\makeatletter
  \@addtoreset{chapter}{part}
  \@addtoreset{@ppsaveapp}{part}
\makeatother

\usepackage[utf8]{inputenc}

\usepackage{graphicx} 
\usepackage{enumitem}
\usepackage{amsmath}
\usepackage{amssymb}
\usepackage{amsthm}
\usepackage{amsmath,amsthm}
\usepackage{mathtools}
\usepackage{algorithm}
\usepackage{algcompatible}
 \usepackage{setspace}
 \usepackage{xspace} 
 
\floatname{algorithm}{}
\let\Algorithm\algorithm
\renewcommand\algorithm[1][]{\Algorithm[#1]\setstretch{1.2}}

\newcommand{\Algo}{\textit{LPT-REV}\xspace}
\newcommand{\Slack}{SLACK\xspace}
\newtheorem{theorem}{Theorem}
\newtheorem{proposition}{Proposition}

\newtheorem{lemma}{Lemma}

\begin{document}
\pagestyle{plain}

\begin{frontmatter}

\title{Longest Processing Time rule for identical parallel machines revisited}

 \author[1,2]{Federico Della Croce\corref{cor1}}
 \author[1]{Rosario Scatamacchia}
\cortext[cor1]{Corresponding author.}
 
  \address[1]{\small Dipartimento di Ingegneria Gestionale e della Produzione, Politecnico di Torino,\\ Corso Duca degli Abruzzi 24, 10129 Torino, Italy, \\{\tt \{federico.dellacroce, rosario.scatamacchia\}@polito.it }}
 \address[2]{CNR, IEIIT, Torino, Italy}

\begin{abstract}
We consider the $P_m || C_{max}$ scheduling problem where the goal is to schedule $n$ jobs on $m$ identical parallel machines to minimize makespan. We revisit the famous Longest Processing Time ($LPT$) rule proposed by Graham in 1969. $LPT$ requires to sort jobs in non-ascending order of processing times and then to assign one job at a time to the machine whose load is smallest so far. We provide new insights on LPT and discuss the approximation ratio of a modification of $LPT$ that improves Graham's bound from $\left( \frac{4}{3} - \frac{1}{3m} \right)$ to $\left( \frac{4}{3} - \frac{1}{3(m-1)} \right)$ for $m \geq 3$ and from $\frac{7}{6}$ to $\frac{9}{8}$ for $m=2$. We use Linear Programming (LP) to analyze the approximation ratio of our approach. This performance analysis can be seen as a valid alternative to formal proofs based on analytical derivation. Also, we derive from the proposed approach an $O(n \log n)$ heuristic. The heuristic splits the sorted jobset in tuples of $m$ consecutive jobs ($1,\dots,m; m+1,\dots,2m;$ etc.) and sorts the tuples in non-increasing order of the difference (slack) between largest job and smallest job in the tuple. Then, List Scheduling is applied to the set of sorted tuples. This approach strongly outperforms $LPT$ on benchmark literature instances.
\end{abstract}

\begin{keyword} Identical Parallel Machines Scheduling \sep LPT rule \sep Linear Programming \sep Approximation algorithms
\end{keyword}
\end{frontmatter}

\section{Introduction}
\label{sec:intro}

We consider the $P_m || C_{max}$ scheduling problem (as denoted in the three--field classification by \cite{GRAHAM1979}) where the goal is to schedule $n$ jobs on $m$ identical parallel machines $M_i~(i=1,\dots,m)$ to minimize the makespan.  $P_m || C_{max}$ is strongly NP-hard (\cite{GaJo79}) and has been intensively investigated in the literature both from a theoretical and a practical point of view. For an exhaustive discussion we refer, among others,  to books \cite{Leung04,Pinedo16} and to the comprehensive survey \cite{ChPoWo99}. The 
pioneering approximation algorithm for the problem is the Longest Processing Time (LPT) rule proposed in \cite{Graham69}. It requires to sort the jobs in non-ascending order of their processing times $p_j ~(j=1,\dots,n)$ and then to assign one job at a time to the machine whose load is smallest so far. This assignment of jobs to machines is also known as List Scheduling (LS). Several properties have been established for LPT in the last decades \cite{Graham69, CofSe76, BoChen93, BloSe15}. We recall the main theoretical results for LPT in the next section. LPT generally exhibits much better performance in practice than the expected theoretical ratios, especially as the number of jobs gets larger.  In \cite{FrKan87}, it was also showed that LPT is asymptotically optimal under mild assumptions on the input data. Due to its simplicity and practical effectiveness, LPT became a cornerstone for the design of more involving exact or heuristic algorithms. \\
We mention other popular approximation algorithms which exploit connections of $P_m || C_{max}$ with bin packing: \textit{Multifit} \cite{CofGaJo78}, \textit{Combine} \cite{LeeMa88} and \textit{Listfit} \cite{GuTo01}. Such algorithms provide better worst case performance than LPT but at the cost of higher running times. Also, Polynomial Time Approximation Schemes (PTASs) were derived for the problem. The first PTAS was given in \cite{HoSh87}. PTASs with improved running times were then provided in \cite{AlAzWoYa98, Hochbaum97, Jansen10}. Recently, an improved PTAS has been proposed in \cite{JaKlVe17}. \\

The contribution of this work is twofold. First, we revisit the LPT rule and provide a simple algorithmic variant that manages to improve the longstanding Graham's approximation ratio derived in \cite{Graham69} keeping the same computational complexity. 
To establish our theoretical results, we also employ Linear Programming (LP) to analyze the worst case performance of the proposed algorithm and to derive approximation bounds. 
In a sense, this paper can also be seen as a followup of the work in \cite{MOV97} where several LPs where used to determine
the worst case approximation ratio of LPT on two uniform machines. 
Recently a growing attention has been paid to the use of LP modeling for the derivation of formal proofs (see \cite{CW16,ACCEHMW16,DEPFSC17}) and we also show here a successful application of this technique.\\
We then move from approximation to heuristics. By generalizing the proposed LPT--based approach, we obtain a simple algorithm 
running in $\mathcal{O}(n\log n)$ time which drastically improves upon LPT and can hence be regarded as a valuable alternative to the most popular constructive heuristic designed for this problem. 

\section{Notation and LPT properties}
\label{sec:LPTreview}

We first recall the main theoretical properties of LPT applied to $P_m || C_{max}$.  
From now on, we will consider the jobs sorted by non-increasing $p_j$ ($p_j \geq p_{j+1}, \;\; j=1,\dots,n-1$). We denote the solution values of the LPT schedule and the optimal makespan by $C_{m}^{LPT}$ and $C_{m}^*$ respectively, where index $m$ indicates the number of machines. Also, we denote by $r^{LPT} = \frac{C_{m}^{LPT}}{C_{m}^*}$ the performance ratio of LPT and, as in \cite{BoChen93}, by $r_k^{LPT}$ 
 the performance ratio of an LPT schedule with $k$ jobs assigned to the machine yielding the completion time (the critical machine) and by $j^\prime$ the job giving the corresponding makespan (the critical job). We summarize hereafter bounds on  $C_m^*$ and properties of the LPT schedule available in the literature.
\begin{proposition} \label{LitProp1} 
\cite{Pinedo16} The following expressions hold:
\begin{align}
& C_m^* \geq \max \left \{ \frac{\sum\limits_{j=1}^{n} p_j}{m}, p_1 \right \}; \label{TheOneOpt}\\
& C_{m}^{LPT} = C_m^* \text{ if }  p_{j^\prime} > \frac{C_{m}^*}{3};\\
& C_{m}^{LPT} \leq  \frac{\sum\limits_{j=1}^{j^\prime} p_j}{m} +  p_{j^\prime}(1-\frac{1}{m}) \leq C_{m}^* + p_{j^\prime}(1-\frac{1}{m}). \label{GenRel}
\end{align}
\end{proposition}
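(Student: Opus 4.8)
The statement bundles three essentially independent facts, so my plan is to treat them in increasing order of difficulty, reusing the earlier bounds where possible.

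For the lower bound \eqref{TheOneOpt} I would argue directly from the fact that the makespan is the maximum machine load. Summing the loads of all $m$ machines in any feasible schedule recovers $\sum_{j=1}^n p_j$, so the most loaded machine carries at least the average $\frac{1}{m}\sum_{j=1}^n p_j$; since this holds for an optimal schedule in particular, it gives the first term. The second term is immediate: the machine receiving job $1$ already has load at least $p_1$, hence $C_m^* \geq p_1$. Taking the larger of the two bounds yields \eqref{TheOneOpt}.

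For the upper bound \eqref{GenRel} I would reason about the instant at which List Scheduling assigns the critical job $j'$. Let $L$ denote the load of the critical machine just before $j'$ is placed, so that $C_m^{LPT} = L + p_{j'}$. Because $j'$ is assigned to a least-loaded machine, $L$ is no larger than the average load over all machines at that moment, and that average equals $\frac{1}{m}\sum_{j=1}^{j'-1} p_j$ since exactly jobs $1,\dots,j'-1$ have been placed. Substituting and regrouping the $p_{j'}$ term produces the first inequality of \eqref{GenRel}. The second inequality then follows by monotonicity, $\sum_{j=1}^{j'} p_j \leq \sum_{j=1}^{n} p_j$, combined with the average lower bound already established in \eqref{TheOneOpt}.

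The heart of the statement is the optimality claim $C_m^{LPT}=C_m^*$ whenever $p_{j'} > \frac{C_m^*}{3}$. My plan is to reduce to a restricted instance and exploit a counting argument. Since the jobs are sorted, the hypothesis forces $p_j \geq p_{j'} > \frac{C_m^*}{3}$ for every $j \leq j'$; hence in any optimal schedule no machine can hold three or more of the jobs $1,\dots,j'$, as their combined load would exceed $C_m^*$. Thus each of the $m$ machines carries at most two such jobs, so $j' \leq 2m$. I would then pass to the reduced instance consisting only of jobs $1,\dots,j'$: its LPT schedule coincides with the first $j'$ placements of the full run, and because $j'$ is critical the reduced LPT makespan equals $C_m^{LPT}$; moreover restricting any optimal schedule of the full instance to these jobs shows the reduced optimum is at most $C_m^*$.

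It then suffices to prove that LPT is optimal on any instance of at most $2m$ jobs in which the optimum places at most two jobs per machine, and this is the step I expect to be the main obstacle. The plan is to show that List Scheduling, applied to the sorted list, produces the ``anti-sorted'' pairing in which a second job landing on an already-occupied machine is matched with the largest still-single job, and then to verify by an exchange argument that pairing the largest remaining job with the smallest remaining job minimizes the maximum pair-load. Chaining $C_m^{LPT} = (\text{reduced LPT}) = (\text{reduced optimum}) \leq C_m^*$ with the trivial inequality $C_m^{LPT} \geq C_m^*$ would then close the argument. The delicate points will be controlling how the machine loads reorder after each assignment, so that the claimed pairing genuinely emerges, and handling the boundary case $j' \leq m$, where the critical machine holds only a single job.
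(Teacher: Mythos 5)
First, there is nothing in the paper to compare your argument against: Proposition \ref{LitProp1} is stated as a collection of known facts with a citation to Pinedo's book, and no proof is given. Judged on its own, your treatment of the lower bound \eqref{TheOneOpt} and of the chain \eqref{GenRel} is complete and correct; these are the standard averaging arguments (the critical machine is least loaded when $j^\prime$ arrives, so its load before $j^\prime$ is at most $\frac{1}{m}\sum_{j=1}^{j^\prime-1}p_j$), and your reduction of the optimality claim to instances with at most $2m$ jobs, all of processing time exceeding one third of the (reduced) optimum, is also sound.

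The genuine gap is in the step you yourself flag as the main obstacle. The assertion that ``List Scheduling, applied to the sorted list, produces the anti-sorted pairing'' is not a bookkeeping detail that follows from watching the loads reorder: it is false for general instances with $n\le 2m$. For $m=3$ and processing times $7,6,5,1,1,1$, LS on the sorted list produces $\{7\},\ \{6,1\},\ \{5,1,1\}$ --- one machine receives three jobs and no anti-sorted pairing emerges. So before any exchange argument can be invoked you must prove that, under the hypothesis that every job exceeds $C_m^*/3$, LS never places a third job on a machine; this is where the hypothesis has to be used a second time, and it requires its own argument. One that works: a job with $p_j>\frac{2}{3}C_m^*$ must be alone on its machine in any optimal schedule (pairing it with any job exceeding $C_m^*/3$ would overshoot $C_m^*$), so if $q$ is the number of such jobs then $n\le 2m-q$; if some machine were about to receive a third job it would be the least loaded with load exceeding $\frac{2}{3}C_m^*$, forcing every one-job machine to hold one of those $q$ large jobs, and counting placed versus remaining jobs then shows no job is left to place, a contradiction. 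With that in hand, your observation that a two-per-machine LS run pairs job $m+i$ with job $m-i+1$, plus the standard exchange argument that this pairing minimizes the maximum pair load, does close the proof. As written, however, the proposal names the crux but does not resolve it, so the argument for the second assertion is incomplete.
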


\begin{proposition} \label{BChenProp}
\cite{BoChen93} For each job $i$ assigned by LPT in position $j$ on a machine, the following inequality holds
\begin{align}
& p_i \leq \frac{C_{m}^*}{j}. \label{ChenRes} 
\end{align}
\end{proposition}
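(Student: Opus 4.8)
The plan is to prove the equivalent inequality $C_m^* \ge j\,p_i$: the optimal makespan is at least $j$ times the processing time of any job that LPT places as the $j$-th job on its machine. Two ingredients drive the argument. The first is a pigeonhole bound on the optimum: among the $(j-1)m+1$ largest jobs, any feasible schedule must place at least $j$ on a common machine, so if there are at least $(j-1)m+1$ jobs of size $\ge p_i$, then some optimal machine receives $\ge j$ of them, each of size $\ge p_i$, giving $C_m^* \ge j\,p_i$ at once. The second is a structural observation about the LPT run at the instant job $i$ is scheduled: since job $i$ is the $j$-th job on its (least-loaded) machine $M$, the $j-1$ jobs already on $M$ precede job $i$ in the sorted order and hence have size $\ge p_i$, so $M$ carries load $\ge (j-1)p_i$; because $M$ is the least loaded, \emph{every} machine carries load $\ge (j-1)p_i$ at that moment.

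First I would dispose of $j=1$, where $p_i \le p_1 \le C_m^*$ by \eqref{TheOneOpt}. For $j \ge 2$ I would split on the number $N$ of jobs of size $\ge p_i$. If $N \ge (j-1)m+1$, the pigeonhole bound finishes the proof. The delicate case is $N \le (j-1)m$, where few jobs are as large as $p_i$ and yet LPT still stacked $j-1$ of them on $M$. Here I would exploit the load observation: the $m-1$ machines other than $M$ hold only the remaining size-$\ge p_i$ jobs but each still carries load $\ge (j-1)p_i$, which forces those jobs to be individually large. In the minimum-$N$ configuration this yields $m-1$ jobs each of size $\ge (j-1)p_i$; arguing by contradiction from $C_m^* < j\,p_i$, two such jobs on one optimal machine give load $\ge 2(j-1)p_i \ge j\,p_i$, and one of them together with any further size-$\ge p_i$ job also gives $\ge j\,p_i$, so each must sit alone on its own machine. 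This consumes $m-1$ machines, squeezing the $j$ jobs of size $\ge p_i$ made up of $M$'s $j-1$ jobs plus job $i$ onto the single remaining machine, for load $\ge j\,p_i$, a contradiction.

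The main obstacle is precisely this second regime. Neither a pure counting nor a pure volume argument is tight on its own: counting the size-$\ge p_i$ jobs and applying pigeonhole to the optimum yields only $C_m^* \ge \ceil{N/m}\,p_i$, while averaging the total load yields only $C_m^* \ge (j-1+\tfrac{1}{m})p_i$, both strictly below the target $j\,p_i$. What makes the bound tight is the interaction of the two effects: a shortage of size-$\ge p_i$ jobs can occur only when some of them are so large that, in any optimal assignment, isolating them again concentrates load $\ge (j-1)p_i$ on the remaining machines and thus recovers the factor $j$. I would therefore organize the hard case uniformly as a contradiction, assuming $C_m^* < j\,p_i$ so that every optimal machine holds at most $j-1$ jobs of size $\ge p_i$ (hence $N \le (j-1)m$), and then showing that the LPT load profile at the assignment of job $i$ cannot be realized with the optimum strictly below $j\,p_i$; the effort lies in bridging the minimum-$N$ configuration above to the intermediate values of $N$.
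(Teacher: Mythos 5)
The paper itself offers no proof of Proposition~\ref{BChenProp}: it is imported as a known result from \cite{BoChen93}, so your attempt has to stand entirely on its own. The pieces you do supply are correct: the case $j=1$ via \eqref{TheOneOpt}; the pigeonhole step (if at least $(j-1)m+1$ jobs have size $\geq p_i$, some optimal machine receives $j$ of them and $C_m^*\geq j\,p_i$); the load observation (when job $i$ becomes the $j$-th job on the least-loaded machine $M$, every machine already carries load $\geq (j-1)p_i$, entirely made of jobs of size $\geq p_i$); and the extreme sub-case in which every machine other than $M$ holds exactly one job, which is then forced to have size $\geq (j-1)p_i$ and, under the hypothesis $C_m^*<j\,p_i$, must sit alone in the optimum, so that the $j$ jobs of $M$ collide on the last remaining machine.

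The gap is exactly where you place it yourself, and it is not a routine ``bridging'' step but the actual mathematical content of the theorem. In the regime $N\leq(j-1)m$, a machine other than $M$ may carry, say, two jobs each of size about $\tfrac{j-1}{2}\,p_i$: the load constraint $\geq(j-1)p_i$ is satisfied, yet neither job has size $\geq(j-1)p_i$, so neither of your collision arguments (two big jobs together, or a big job plus a further size-$\geq p_i$ job) applies, and the count $i\leq(j-1)m$ is not improved. One can check that for $j=3$ a single round of refinement still closes the argument (every machine with exactly one job carries a job of size $\geq 2p_i$; such a job excludes all other size-$\geq p_i$ jobs from its optimal machine; the two counts then contradict each other), but already for $j=4$ a two-job machine only guarantees a job of size $\geq\tfrac{3}{2}p_i$, which does not lower the per-machine capacity of $j-1=3$ large jobs in the optimum, so the refinement does not propagate. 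What is missing is a genuine amortized or weighting argument that simultaneously trades off, machine by machine, the number of jobs against their sizes in the LPT prefix schedule and in the optimal schedule; neither pure counting, nor pure volume, nor the single extreme configuration you analyze supplies it. As written, the proof is incomplete.
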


\begin{proposition} \label{LitBounds}
The following tight approximation ratios hold for LPT:
\begin{align}
& r_2 \leq \frac{4}{3} - \frac{1}{3(m-1)}; \quad \text{\cite{BoChen93}} \label{Chen2m} \\
& r_k  \leq \frac{k+1}{k} - \frac{1}{km} \quad k \geq 3. \quad \text{\cite{CofSe76}} \label{GenBound}
\end{align}
\end{proposition}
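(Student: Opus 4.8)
The plan is to split the analysis according to the number $k$ of jobs on the critical machine, proving the bound for $k\ge 3$ directly from Propositions \ref{LitProp1} and \ref{BChenProp}, and then isolating $k=2$, which is the genuinely harder case.

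For the Coffman--Sethi bound \eqref{GenBound}, first I would locate the critical job: since the critical machine carries exactly $k$ jobs and $j^\prime$ is the last one assigned there (the one completing the makespan), $j^\prime$ occupies position $k$ on that machine. Proposition \ref{BChenProp} then yields $p_{j^\prime}\le C_m^*/k$. Substituting this into the second inequality of \eqref{GenRel}, namely $C_m^{LPT}\le C_m^*+p_{j^\prime}\left(1-\frac{1}{m}\right)$, and dividing by $C_m^*$ gives the chain
\[
r_k=\frac{C_m^{LPT}}{C_m^*}\le 1+\frac{p_{j^\prime}}{C_m^*}\left(1-\frac{1}{m}\right)\le 1+\frac{1}{k}\left(1-\frac{1}{m}\right)=\frac{k+1}{k}-\frac{1}{km}.
\]
This is self-contained and uses only the two preceding propositions; note that it in fact holds for every $k\ge 1$, and for $k=3$ it recovers Graham's $\frac{4}{3}-\frac{1}{3m}$.

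The main obstacle is inequality \eqref{Chen2m}, the case $k=2$. The computation above with $k=2$ only delivers $\frac{3}{2}-\frac{1}{2m}$, which is strictly weaker than $\frac{4}{3}-\frac{1}{3(m-1)}$, so the position bound alone is not enough and extra combinatorial structure must be exploited. The crucial feature is that a critical machine with exactly two jobs forces $j^\prime\ge m+1$: the first $m$ jobs are each assigned to a distinct empty machine in position $1$, so any position-$2$ job has index at least $m+1$. Hence there are at least $m+1$ jobs, all of size at least $p_{j^\prime}$. I would then argue by contradiction: assuming $r_2>\frac{4}{3}-\frac{1}{3(m-1)}>1$, Proposition \ref{LitProp1} gives $p_{j^\prime}\le C_m^*/3$, and writing the makespan as $p_a+p_{j^\prime}$ (with $p_a$ the position-$1$ job on the critical machine) forces $p_a>\left(1-\frac{1}{3(m-1)}\right)C_m^*$, i.e.\ a first job extremely close to $C_m^*$.

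The delicate point is turning this largeness of $p_a$ into the denominator $(m-1)$. The intuition is that in the optimum a machine holding such a large job can absorb essentially nothing else, so the remaining load must be spread over only $m-1$ machines, sharpening the averaging bound $C_m^*\ge\frac{1}{m}\sum_j p_j$ into something of the $(m-1)$-flavour. Making this rigorous is where the real work lies: the naive attempt to average over $m-1$ machines merely reproduces $C_m^*\ge\frac{1}{m}\sum_j p_j$, so one must combine the load inequality $\sum_{j<j^\prime}p_j\ge m\,p_a$ (every machine carries load at least $p_a$ just before $j^\prime$ is placed) with a careful case analysis, for instance separating $j^\prime=m+1$ (where a pigeonhole argument on the two smallest of the first $m+1$ jobs shows $C_m^{LPT}=C_m^*$, so LPT is already optimal) from $j^\prime\ge m+2$. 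This case distinction, and the accounting of how the $\ge m+1$ large jobs must pack within makespan $C_m^*$, is the step I expect to be hardest. I would close by exhibiting explicit worst-case families attaining each bound to establish the claimed tightness.
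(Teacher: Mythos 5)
Your derivation of \eqref{GenBound} is correct and complete: the critical job occupies position $k$ on the critical machine, Proposition \ref{BChenProp} gives $p_{j^\prime}\le C_m^*/k$, and substituting into \eqref{GenRel} yields $\frac{k+1}{k}-\frac{1}{km}$. Note, however, that the paper does not actually prove Proposition \ref{LitBounds} at all --- both inequalities are recalled from the literature with citations to \cite{BoChen93} and \cite{CofSe76} --- so there is no in-paper argument to compare against; your $k\ge 3$ chain is essentially the same mechanism the authors later package as Lemma \ref{BoChenImplied}, just applied to the critical machine itself.

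The genuine gap is \eqref{Chen2m}. Your setup is sound: assuming $r_2>\frac{4}{3}-\frac{1}{3(m-1)}$ forces $p_{j^\prime}\le C_m^*/3$ by Proposition \ref{LitProp1}, hence $p_a>\left(1-\frac{1}{3(m-1)}\right)C_m^*$, and the sub-case $j^\prime=m+1$ does close by the pigeonhole you describe. But the case $j^\prime\ge m+2$ --- converting the largeness of $p_a$ and the existence of at least $m+1$ jobs of size $\ge p_{j^\prime}$ into the denominator $3(m-1)$ --- is exactly where you stop, deferring to ``careful case analysis'' and ``accounting.'' That accounting is the entire content of Chen's theorem: the load inequality $\sum_{j<j^\prime}p_j\ge m\,p_a$ you propose to use only reproduces the $m$-machine average, and the real argument must track which jobs are co-scheduled in the optimum to rule out the bad configurations; it occupies a dedicated paper and cannot be patched in a line. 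You also do not exhibit the extremal families, so the claimed tightness of either bound is unsupported. As written, the proposal proves the $k\ge 3$ bound but only sketches a strategy for $r_2$; the honest options are to cite \cite{BoChen93} for \eqref{Chen2m}, as the authors do, or to reproduce Chen's full case analysis.
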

Approximation bounds \eqref{GenBound} were derived in \cite{CofSe76} and are also known as the a-posteriori generalized bounds. For $k=3$, the corresponding approximation ratio of $\frac{4}{3} - \frac{1}{3m}$ is the well--known Graham's bound \cite{Graham69} and constitutes the worst case bound for LPT. A straightforward implication of Property \eqref{ChenRes} is the following: 
\begin{lemma}
\label{BoChenImplied} 
If LPT provides a schedule where a non critical machine processes at least $k$ jobs before the critical job $j^\prime$, then $r^{LPT}\leq \frac{k+1}{k} - \frac{1}{km}$.
\end{lemma}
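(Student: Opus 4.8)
The plan is to reduce everything to a single bound on the size of the critical job, namely $p_{j^\prime}\le \frac{C_m^*}{k}$, and then to feed this bound into the general relation \eqref{GenRel}. The point of the lemma is that Graham's a-posteriori bound \eqref{GenBound} is usually read as a statement about the \emph{critical} machine carrying $k$ jobs, but the same ratio survives as soon as \emph{any} machine accumulates $k$ jobs ahead of $j^\prime$; the mechanism that makes this work is the monotonicity of the processing times along the LPT order.

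First I would fix a non-critical machine $M'$ carrying at least $k$ jobs at the moment $j^\prime$ is assigned, and single out the job $a$ occupying position $k$ on $M'$. Since $M'$ already holds $k$ jobs when $j^\prime$ is scheduled, $a$ is placed strictly before $j^\prime$ in the LPT sequence; because LPT processes jobs in non-increasing order of processing time, this gives $p_a \ge p_{j^\prime}$. Next I would invoke Property \eqref{ChenRes} applied to $a$, which sits in position $k$, to obtain $p_a \le \frac{C_m^*}{k}$. Chaining the two inequalities yields
\begin{equation*}
p_{j^\prime} \le p_a \le \frac{C_m^*}{k}.
\end{equation*}
This is the crux of the argument: the position-$k$ bound of Proposition \ref{BChenProp}, which a priori constrains only the job physically sitting in position $k$, is transferred to the critical job through the ordering of the $p_j$.

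The remainder is a one-line substitution into the second inequality of \eqref{GenRel}, $C_m^{LPT} \le C_m^* + p_{j^\prime}\left(1-\frac{1}{m}\right)$, which gives
\begin{equation*}
C_m^{LPT} \le C_m^* + \frac{C_m^*}{k}\left(1-\frac{1}{m}\right) = C_m^*\left(\frac{k+1}{k} - \frac{1}{km}\right),
\end{equation*}
and dividing by $C_m^*$ yields $r^{LPT} \le \frac{k+1}{k} - \frac{1}{km}$, as claimed. I expect the only genuine subtlety to lie in the first step: one must verify carefully that the job in position $k$ on $M'$ is indeed processed before $j^\prime$ (so that $p_a \ge p_{j^\prime}$ is legitimate) and that the notion of ``position'' used in Proposition \ref{BChenProp} matches the count of jobs placed ahead of the critical job. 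Once that bookkeeping is settled, the rest is routine.
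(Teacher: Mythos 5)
Your argument is correct and is essentially identical to the paper's own proof: both identify the job in position $k$ on the non-critical machine, bound it by $\frac{C_m^*}{k}$ via Property \eqref{ChenRes}, transfer that bound to $p_{j^\prime}$ using the LPT ordering, and conclude through inequality \eqref{GenRel}. No further comment is needed.
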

\begin{proof}
Denote by $i$ the job in the $k-th$ position on the non critical machine, with $p_i \geq p_{j^\prime}$. Since we have $p_i \leq \frac{C_{m}^*}{k}$ from Property \eqref{ChenRes} and thus $p_{j^\prime} \leq \frac{C_{m}^*}{k}$ also holds, we get from inequality \eqref{GenRel} that $C_{m}^{LPT} \leq (\frac{k+1}{k} - \frac{1}{km} ) C_{m}^*$.
\end{proof}


\section{LPT revisited}
\label{LPTrevisiting}
We provide here further insights on the Longest Processing Time rule. As usually employed in the worst case performance analysis of LPT, we assume that the critical job is the last one, namely $j^\prime = n$. Otherwise, we would have other jobs scheduled after the critical job that do not affect the makespan provided by LPT but can contribute to increase the optimal solution value. 
\subsection{Results for LPT}
We first elaborate on the approximation bound provided in Lemma \ref{BoChenImplied}. We state the following proposition.
\begin{proposition}
\label{kJobsMac}
If LPT schedules at least $k$ jobs on a non critical machine before assigning the critical job, then $r^{LPT} \leq \frac{k+1}{k} - \frac{1}{k(m-1)}$ for $m \geq k + 2$. 
\end{proposition}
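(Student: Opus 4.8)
The plan is to sharpen the averaging argument behind Lemma \ref{BoChenImplied} by \emph{excluding} the heavily loaded non-critical machine from the pool over which the critical machine's load is averaged. Write $S = \sum_{j=1}^{n} p_j$, let $L$ denote the load of the critical machine just before the critical job $j^\prime$ is appended (so that $C_m^{LPT} = L + p_{j^\prime}$), and let $M_h$ be the non-critical machine carrying at least $k$ jobs before $j^\prime$, with load $\ell_h$ at that instant. Exactly as in the proof of Lemma \ref{BoChenImplied}, Proposition \ref{BChenProp} applied to the job in position $k$ on $M_h$, together with the assumption $j^\prime = n$, gives $p_{j^\prime} \le C_m^*/k$. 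Moreover, since each of the $\ge k$ jobs preceding $j^\prime$ on $M_h$ has processing time at least $p_{j^\prime}$, we have $\ell_h \ge k\,p_{j^\prime}$.

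First I would record the List Scheduling invariant: at the moment $j^\prime$ is assigned, the critical machine attains the minimum load, so every machine has load at least $L$. Since the total load over all machines at that instant equals $S - p_{j^\prime}$, isolating the critical machine and $M_h$ and bounding each of the remaining $m-2$ machines from below by $L$ yields $(m-1)L + \ell_h \le S - p_{j^\prime}$. Substituting $\ell_h \ge k\,p_{j^\prime}$ and $S \le m\,C_m^*$ (inequality \eqref{TheOneOpt}) then gives the refined estimate $L \le \frac{m\,C_m^* - (k+1)p_{j^\prime}}{m-1}$, in which the denominator $m-1$ rather than $m$ is precisely the source of the improved ratio.

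Combining with $C_m^{LPT} = L + p_{j^\prime}$ produces, after normalizing $C_m^* = 1$, the estimate $C_m^{LPT} \le \frac{m + (m-k-2)\,p_{j^\prime}}{m-1}$. Here the hypothesis $m \ge k+2$ becomes decisive: it makes the coefficient $m-k-2$ of $p_{j^\prime}$ nonnegative, so the right-hand side is nondecreasing in $p_{j^\prime}$ and is maximized at the extreme value $p_{j^\prime} = C_m^*/k$. Plugging this in and simplifying gives $C_m^{LPT} \le \left(\frac{(k+1)m - k - 2}{k(m-1)}\right)C_m^* = \left(\frac{k+1}{k} - \frac{1}{k(m-1)}\right)C_m^*$, which is the claimed bound.

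I expect the main obstacle to be justifying the refined averaging cleanly: one must argue that $M_h$ may legitimately be removed from the averaging pool while still lower-bounding the remaining machines by $L$, and check that the two appearances of $p_{j^\prime}$ — once through $\ell_h \ge k\,p_{j^\prime}$ and once through $p_{j^\prime} \le C_m^*/k$ — combine in the right direction. The sign condition $m \ge k+2$ has to be tracked carefully, since it is exactly what forces the worst case to sit at $p_{j^\prime} = C_m^*/k$ rather than at a smaller value; for $m < k+2$ the extremal analysis, and hence the stated formula, would change. This entire computation can equivalently be cast as the small LP the authors advocate, with $L$, $p_{j^\prime}$, $\ell_h$ and $S$ as variables and the inequalities above as constraints, but the direct derivation already identifies the binding constraints and suffices.
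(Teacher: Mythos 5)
Your argument is correct and rests on exactly the same decomposition as the paper: isolate the critical machine (load $L=t_c$), the non-critical machine carrying $\ge k$ jobs (load $\ell_h=t^{\prime}$), and average the remaining $m-2$ machines against $t_c$; the inequalities you combine are precisely the constraints \eqref{eq:BoundOpt}, \eqref{eq:l1}, \eqref{eq:avL3}, \eqref{eq:sump} and \eqref{eq:pnOpt3} of the paper's LP model. The difference is purely in how the system is resolved: the paper feeds these constraints to an LP, exhibits a primal solution, and certifies optimality by producing a matching dual solution (with the condition $m\ge k+2$ surfacing as feasibility of dual constraint \eqref{eq:Dual3}), whereas you eliminate the variables by hand and locate the worst case directly, with $m\ge k+2$ appearing as the sign condition that pushes $p_{j^\prime}$ to its extreme value $C_m^*/k$. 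Your derivation is arguably more transparent about which constraints bind --- notably you never use $t_c\le t^{\prime}$, consistent with the paper's dual solution having $\lambda_3=0$ --- while the paper's LP-duality route is the methodological point it wants to advertise and scales to the messier models used later (e.g.\ in Propositions \ref{Slack7/6} and \ref{First2m-1}) where a clean hand elimination would be painful. One cosmetic remark: your statement that the total load at the moment $j^\prime$ is assigned \emph{equals} $S-p_{j^\prime}$ uses the standing assumption $j^\prime=n$; since you only need ``$\le$'', nothing is lost, but it is worth phrasing it that way.
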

\begin{proof}
First, we can assume that the critical machine processes at least two jobs, otherwise the LPT solution would be optimal as $C_{m}^* \geq p_1$ due to Proposition \ref{LitProp1}. 
Also, due to Lemma \ref{BoChenImplied}, condition $p_{n} \leq \frac{C_{m}^*}{k}$ holds. Denote by $t_{c}$ the completion time of the critical machine before loading critical job $n$. We have $C_{m}^{LPT} = t_{c} + p_{n}$. Also, denote by $t^{\prime}$ the completion time of a non-critical machine processing at least $k$ jobs and by $t^{\prime\prime}$ the sum of completion times of the other $(m-2)$ machines, namely $t^{\prime\prime} = \sum\limits_{j=1}^{n} p_j - t^{\prime} - (t_{c} + p_{n})$. Since the application of List Scheduling to the sorted jobs, each of the $(m-2)$ machines must have a completion time greater than $t_{c}$. Hence, the following inequality holds 
\begin{align}
&\frac{t^{\prime\prime}}{m-2} \geq t_{c}.  \label{AverL3}
\end{align}
We now rely on Linear Programming to evaluate the worst case performance ratio $\frac{C_{m}^{LPT}}{C_{m}^*}$. More precisely, we introduce an LP formulation where we can arbitrarily set the value $C_{m}^{LPT}$ to 1 and minimize the value of $C_{m}^*$. We associate non-negative variables $sum_p$ and $opt$ with $\sum\limits_{j=1}^{n} p_j$ and $C_{m}^*$, respectively. We also consider completion times $t_{c}$, $t^{\prime}$, $t^{\prime\prime}$ and processing time $p_n$ as non-negative variables in the LP model. Since we have $p_{n} \leq \frac{C_{m}^*}{k}$, we introduce an auxiliary slack variable $sl$ to write the corresponding constraint in the LP model as $p_{n} + sl - \frac{opt}{k} = 0$. The following LP model is implied: 
{\small
\begin{align}
   \text{minimize}\quad & opt \label{eq:ObjOpt}\\
	\text{subject to}\quad
	& - m\cdot opt + sum_p  \leq 0  \label{eq:BoundOpt}\\
	& k\cdot p_n - t^{\prime} \leq 0  \label{eq:l1}\\
	& t_{c} - t^{\prime}  \leq 0 \label{eq:L1t1}\\
    & (m-2)t_{c}  - t^{\prime\prime}  \leq 0  \label{eq:avL3}\\
    & (t_{c}  + p_n) + t^{\prime} + t^{\prime\prime}  - sum_p = 0 \label{eq:sump}\\	
    & t_{c} + p_n = 1 \label{eq:heur} \\
	& p_n+ sl - \frac{opt}{k}  = 0  \label{eq:pnOpt3} \\
	& t_{c}, t^{\prime}, t^{\prime\prime}, p_n, sum_p, opt, sl \geq 0  \label{eq:TheVar} 
\end{align}}
The minimization of the objective function \eqref{eq:ObjOpt}, after setting without loss of generality the LPT solution value to 1 (constraint \eqref{eq:heur}), provides an upper bound on the performance ratio of LPT rule. Constraint  \eqref{eq:BoundOpt} represents the bound $C_{m}^* \geq \frac{\sum\limits_{j=1}^{n} p_j}{m}$ while constraint \eqref{eq:l1} states that the value of $t^{\prime}$ is at the least $kp_{n}$, since $k$ jobs with greater processing than $p_{n}$ are assigned to the non critical machine. Constraint  \eqref{eq:L1t1} states that the completion time of the critical machine before the execution of the last job is not superior to the completion time of the other machine processing at least $k$ jobs. Constraint \eqref{eq:avL3} fulfills inequality \eqref{AverL3}. Constraint \eqref{eq:sump} guarantees that variable $sum_p$ corresponds to $\sum\limits_{j=1}^{n} p_j$ and constraint \eqref{eq:pnOpt3} represents condition $p_{n} \leq \frac{C_{m}^*}{k}$. Eventually, constraints \eqref{eq:TheVar} state that all variables are non-negative. A feasible solution of model \eqref{eq:ObjOpt}--\eqref{eq:TheVar} for any value of $m$ is:
\begin{align}
 & t_{c} = \frac{k(m - 1) - 1}{(k + 1)m - k - 2}; ~  p_n = \frac{m - 1}{(k + 1)m - k - 2}; \nonumber \\
 & t^{\prime} = \frac{k(m - 1)}{(k + 1)m - k - 2}; ~ t^{\prime\prime} = \frac{(m - 2)(k(m - 1) - 1)}{(k + 1)m - k - 2}; \nonumber \\
 & opt = \frac{k(m - 1)}{(k + 1)m - k - 2}; ~ sum_p = \frac{m(m - 1)k}{(k + 1)m - k - 2}; ~ sl = 0. \nonumber 
\end{align}
We can show by strong duality that such a solution is in fact optimal for any $m \geq k + 2$. Plugging $ p_n = \frac{opt}{k} - sl$ from constraint \eqref{eq:pnOpt3} in constraints \eqref{eq:sump} and \eqref{eq:heur}, we get an equivalent reduced LP model. If we associate dual variables $\lambda_i$ $(i=1,\dots,6)$ with constraints \eqref{eq:BoundOpt}--\eqref{eq:heur} respectively, the corresponding dual formulation of the reduced problem is as follows: 
{\small
\begin{align}
   \text{maximize}\quad & \lambda_6 \label{eq:ObjDual}\\
	\text{subject to}\quad
	& - m\lambda_1 + \lambda_2 + \frac{\lambda_5 + \lambda_6}{k} \leq 1  \label{eq:Dual1}\\
	& \lambda_1 - \lambda_5 \leq 0  \label{eq:Dual2}\\
	& -k\cdot \lambda_2 - \lambda_5 - \lambda_6 \leq 0   \label{eq:Dual3}\\
	& -\lambda_2 - \lambda_3 + \lambda_5 \leq 0   \label{eq:Dual4}\\
	& \lambda_3  + (m - 2)\lambda_4 + \lambda_5 + \lambda_6 \leq 0  \label{eq:Dual5}\\
	& -\lambda_4 + \lambda_5 \leq 0   \label{eq:Dual6}\\
	& \lambda_1, \lambda_2, \lambda_3, \lambda_4 \leq 0  \label{eq:TheDualVar}
\end{align}}
where the dual constraints \eqref{eq:Dual1}--\eqref{eq:Dual6} are related to the primal variables $opt$, $sum_p, sl$, $t^{\prime}, t_{c}, t^{\prime\prime}$ respectively. For any $m \geq k + 2$, a feasible solution of model \eqref{eq:ObjDual}--\eqref{eq:TheDualVar} is
\begin{align}
 & \lambda_1 =  \lambda_2 = \lambda_4 = \lambda_5 = \frac{-k}{(k + 1)m - k - 2}; ~ \lambda_3 = 0; ~ \lambda_6 = \frac{k(m - 1)}{(k + 1)m - k - 2}; \nonumber
\end{align}
where condition $m \geq k + 2$ is necessary to satisfy constraint \eqref{eq:Dual3}. Since $opt = \lambda_6$ in the above solutions, by strong duality these solutions are both optimal. We hence have
$$\frac{C_{m}^{LPT}}{C_{m}^*} \leq \frac{1}{opt} = \frac{(k + 1)m - k - 2}{k(m - 1)} = \frac{(k +1)(m - 1) - 1}{k(m - 1)} = \frac{k + 1}{k} - \frac{1}{k(m-1)}$$
which shows the claim. 
\end{proof}
Notably, with respect to Lemma \ref{BoChenImplied}, the result of Proposition \ref{kJobsMac} for $k = 3$ provides already a better bound than Graham's bound and equal to $\frac{4}{3} - \frac{1}{3(m-1)}$ for $m \geq 5$. Also, with the results of Proposition \ref{kJobsMac} we can state the following result. 
\begin{proposition}
\label{LPT2m+2}
In  $P_m || C_{max}$ instances with $n \geq 2m + 2$ and $m \geq 5$, $LPT$ has an approximation ratio $\leq \left(\frac{4}{3} - \frac{1}{3(m-1)}\right)$.
\end{proposition}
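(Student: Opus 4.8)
The plan is to deduce the statement from the two results already in hand—Proposition \ref{kJobsMac} with $k=3$ and the Chen-type bound of Proposition \ref{BChenProp}—by a simple pigeonhole count on the number of jobs per machine. Throughout I would keep the standing assumption $j^\prime = n$, so that every job other than the critical one is scheduled before it; in particular the job count of a non-critical machine is unaffected by the final placement of the critical job. Writing $n_c$ for the number of jobs on the critical machine (critical job included), I would split the analysis according to whether $n_c \geq 4$ or $n_c \leq 3$, which is an exhaustive partition.

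In the first case, $n_c \geq 4$, the critical job occupies a position $\geq 4$ on its machine, so Proposition \ref{BChenProp} gives $p_n \leq \frac{C_m^*}{4}$. Substituting this into inequality \eqref{GenRel} would yield
\[
C_m^{LPT} \leq C_m^* + p_n\Bigl(1 - \tfrac{1}{m}\Bigr) \leq \Bigl(\tfrac{5}{4} - \tfrac{1}{4m}\Bigr) C_m^*.
\]
In the second case, $n_c \leq 3$, the $m-1$ non-critical machines together carry at least $n - 3 \geq 2m - 1$ jobs; since $2m-1 > 2(m-1)$, the pigeonhole principle forces at least one non-critical machine to process $\geq 3$ jobs, all of which (because $j^\prime = n$) are scheduled before the critical job. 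Proposition \ref{kJobsMac} with $k=3$ then applies directly and gives $r^{LPT} \leq \frac{4}{3} - \frac{1}{3(m-1)}$; note this is exactly where the hypothesis $m \geq 5 = k+2$ is consumed.

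To close the first case I would still have to verify that the intermediate bound is dominated by the target, i.e. that $\frac{5}{4} - \frac{1}{4m} \leq \frac{4}{3} - \frac{1}{3(m-1)}$; rearranging reduces this to $(m-3)(m+1) \geq 0$, which holds for every $m \geq 3$ and a fortiori for $m \geq 5$. I expect the main obstacle to be not any single computation but the bookkeeping: one must confirm that the threshold $n \geq 2m+2$ is precisely what guarantees a non-critical machine with three jobs once $n_c \leq 3$, and that the assumption $j^\prime = n$ is genuinely used so that those three jobs all precede the critical job—otherwise the hypothesis of Proposition \ref{kJobsMac} would fail. The degenerate possibility $n_c = 1$, where $C_m^{LPT} = p_n \leq p_1 \leq C_m^*$ makes the schedule optimal, is harmlessly absorbed into the case $n_c \leq 3$, since the pigeonhole count only improves there.
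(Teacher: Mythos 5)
Your proof is correct and follows essentially the same route as the paper: the main case is reduced to Proposition \ref{kJobsMac} with $k=3$ via the same pigeonhole count ($n-3\geq 2m-1$ jobs on $m-1$ non-critical machines), and the remaining case is dispatched by the position-$\geq 4$ bound $\frac{5}{4}-\frac{1}{4m}$, which you correctly verify is dominated by the target for $m\geq 3$. The only difference is organizational --- you split on the number of jobs $n_c$ on the critical machine rather than first peeling off $n\geq 3m+1$ and then conditioning on the critical job's position, which lets you absorb the $n_c=2$ case into the pigeonhole argument instead of invoking the $r_2$ bound as the paper does.
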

\begin{proof}
In instances with $n \geq 3m+1$, the claim straightforwardly holds since there exists at least one machine in the LPT schedule executing at least four jobs and so either bound $r_k$ or Lemma \ref{BoChenImplied}, with $k \geq 4$ respectively, applies.\\ 
Consider the remaining cases with $2m + 2 \leq n \leq 3m$.
We assume the critical job $n$ in third position on a machine, otherwise either bound $r_2$ holds or at least bound $r_4$ holds. This implies that $LPT$ schedules at least another job in position $\geq 3$ on a non critical machine. Hence, the results of Proposition \ref{kJobsMac} with $k = 3$ apply.
\end{proof}

In instances with $n \geq 2m+2$ and $3 \leq m \leq 4$, we can combine the reasoning underlying model \eqref{eq:ObjOpt}--\eqref{eq:TheVar} with a partial enumeration of the optimal/LPT solutions and state the following result.
\begin{proposition}
\label{mthfo}
In  $P_m || C_{max}$ instances with $n \geq 2m+2$, $LPT$ has an approximation ratio 
$\leq \left(\frac{4}{3} - \frac{1}{3(m-1)}\right)$ for $3 \leq m \leq 4$.
\end{proposition}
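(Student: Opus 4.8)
The plan is to reduce, exactly as in the proof of Proposition~\ref{LPT2m+2}, to a short finite list of structured configurations and then certify each one by an enriched version of model \eqref{eq:ObjOpt}--\eqref{eq:TheVar}. First, if the critical job $n$ lies in position $1$ or $2$ on its machine, bound $r_2$ in \eqref{Chen2m} already gives precisely $\frac{4}{3}-\frac{1}{3(m-1)}$; if it lies in position $\ge 4$, or if some non-critical machine carries $\ge 4$ jobs, then Lemma~\ref{BoChenImplied} with $k=4$ yields $\frac{5}{4}-\frac{1}{4m}$, which equals $\frac{7}{6}$ for $m=3$ and is below $\frac{11}{9}$ for $m=4$. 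Hence I may assume the critical job is in position $3$ and that every machine holds at most $3$ jobs, so that $2m+2\le n\le 3m$. Writing $a$, $b$, $c$ for the numbers of machines with $3$, $2$, $1$ jobs, the relations $a+b+c=m$ and $2a+b=n-m\ge m+2$ give $a=(2a+b)-(a+b)\ge 2$, so at least two machines (including the critical one) carry exactly three jobs. This leaves the shapes $(3,3,2)$ and $(3,3,3)$ for $m=3$, and $(3,3,2,2)$, $(3,3,3,1)$, $(3,3,3,2)$, $(3,3,3,3)$ for $m=4$, to be handled one by one.

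For each shape I would keep the variables and constraints of model \eqref{eq:ObjOpt}--\eqref{eq:TheVar} with $k=3$ but augment them, because the bare relaxation is genuinely too loose at these values of $m$: for $m=3$, $n=8$ it admits the feasible point $p_n=\tfrac14$, $t_c=t'=t''=\tfrac34$, $sum_p=\tfrac52$, $opt=\tfrac56$, i.e.\ a spurious ratio $\tfrac65>\tfrac76$, since it mislabels the least-loaded two-job machine as the critical one. The extra valid inequalities I would add are: a per-machine load bound stating that a machine carrying $h$ jobs has load $\ge h\,p_n$ (using $p_j\ge p_n$); the Chen bound \eqref{ChenRes} applied to the second job of each three-job machine, giving $t_c\le \tfrac32\,opt$ and the analogous $t'\le\tfrac32\,opt$; and the list-scheduling timing fact that each three-job non-critical machine was least loaded when it received its third job $\beta$, so that its load minus $p_\beta$ is $\le t_c$, with $p_n\le p_\beta\le \tfrac{opt}{3}$. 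Making this last, sequencing-based inequality linear is exactly where the \emph{partial enumeration} enters: I would branch on whether $\beta$ precedes the critical job and on the order in which machines reach three jobs, splitting each shape into a handful of ordered sub-cases.

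In each sub-case I would then exhibit a primal optimum together with a matching solution of the corresponding reduced dual, so that strong duality pins the minimum of $opt$ at $\frac{3(m-1)}{4(m-1)-1}$ and therefore $\frac{C_m^{LPT}}{C_m^*}\le \frac{4}{3}-\frac{1}{3(m-1)}$, mirroring the closing computation of Proposition~\ref{kJobsMac}. The main obstacle is precisely the gap exposed by the $\tfrac65$ point above: Proposition~\ref{kJobsMac} fails for $m<k+2$ because dual constraint \eqref{eq:Dual3} can no longer be satisfied, so the crux is to select, for each shape, a minimal family of LPT-dynamics inequalities that is simultaneously linear after the case split, valid for every genuine LPT execution, and tight enough to restore the target. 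The delicate bookkeeping is in checking that the enumeration of shapes and placement orders is exhaustive and that no resulting LP attains a value below $\frac{3(m-1)}{4(m-1)-1}$, so that the bound $\frac{4}{3}-\frac{1}{3(m-1)}$ holds uniformly for $3\le m\le 4$.
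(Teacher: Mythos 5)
Your reduction is essentially the paper's: the $r_2$ and $r_4$/Lemma~\ref{BoChenImplied} arguments confine the problem to $2m+2\le n\le 3m$ with the critical job in third position, every machine holding at most three jobs in the LPT schedule, and (via $p_n\le C_m^*/4$) at most three jobs per machine in the optimum as well; your shape count showing a second three-job machine also matches the paper's Case~b). The genuine gap is in how you propose to close the resulting LPs. You add only \emph{schedule-side} valid inequalities (per-machine loads $\ge h\,p_n$, Chen bounds on second-position jobs, list-scheduling timing relations), and these are provably not enough. For $m=3$, $n=8$ take $p_1=p_2=\tfrac12$, $p_3=\dots=p_8=\tfrac14$, with the critical machine at load $1$, the other three-job machine at load $\tfrac34$, the two-job machine at load $\tfrac34$, and $opt=\tfrac56$: this satisfies $\sum_j p_j\le 3\,opt$, $p_8\le opt/3$, $t_c\le t'+p_\beta$, $t'\le t_c$, $t'\le\tfrac32 opt$, $t'+p_\beta\ge 3p_8$, $t''\ge t_c$, and the ordering/positional constraints, so every LP in your family bottoms out at $opt=\tfrac56$, certifying only $\tfrac65>\tfrac76$. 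What kills this point in the paper is a family of \emph{optimum-side} pigeonhole constraints obtained by partially enumerating how the optimal solution packs the largest jobs: e.g.\ $opt\ge p_1+p_{(3m-1)}+p_{3m}$ together with $p_1\le 2p_{3m}$ when $n=3m$ forces three jobs per machine in the optimum, and constraints such as $opt\ge\frac{p_1+p_2+p_6+p_7+p_8}{2}$ (for $m=3$, $n=8$) or $opt\ge\frac{p_1+p_2+p_3+p_{10}+p_{11}}{2}$ versus $opt\ge\frac{p_1+p_2+p_3+p_7+\dots+p_{11}}{3}$ (for $m=4$, $n=11$) depending on whether jobs $1,2,3$ occupy two or three machines in the optimum. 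On the witness above the first of these gives $opt\ge\tfrac78>\tfrac56$. So the case split you need is over optimal-solution configurations, not (only) over LPT placement orders; without it no family of LPT-dynamics inequalities will reach $\tfrac76$, and your closing claim that each LP attains exactly $\frac{3(m-1)}{4(m-1)-1}$ also does not match what these LPs actually yield (the paper obtains a mix of values such as $\tfrac{19}{16}$, $\tfrac{11}{9}$, $\tfrac{15}{13}$, $\tfrac76$, all merely below the target).
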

\begin{proof}
See Appendix. 
\end{proof}
Consider now instances with $2m$ jobs at most. The following proposition holds.
\begin{proposition}
\label{LPT2m}
In $P_m || C_{max}$ instances with $n \leq 2m$, 
$LPT$ has an approximation ratio $\leq \left(\frac{4}{3} - \frac{1}{3(m-1)}\right)$.
\end{proposition}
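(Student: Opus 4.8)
The plan is to reduce the claim to a case analysis on $k$, the number of jobs the critical machine processes (equivalently, the position of the critical job $n=j'$), since the a-posteriori bounds \eqref{Chen2m}--\eqref{GenBound} are already indexed by this quantity. As usual I assume $j'=n$ and that LPT is not already optimal, so by Proposition \ref{LitProp1} I may assume $p_n \le \frac{C_m^*}{3}$. If $k=1$ the critical job is alone on its machine, hence $C_m^{LPT}=p_1 \le C_m^*$ and LPT is optimal. If $k=2$, the ratio equals $r_2$ and \eqref{Chen2m} gives directly $r^{LPT}=r_2 \le \frac{4}{3} - \frac{1}{3(m-1)}$, which is exactly the target. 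If $k \ge 4$, I invoke \eqref{GenBound}, $r^{LPT}=r_k \le \frac{k+1}{k}-\frac{1}{km}$, and it remains to check the elementary inequality $\frac{k+1}{k}-\frac{1}{km} \le \frac{4}{3} - \frac{1}{3(m-1)}$ for all $k\ge 4$ and $m\ge 3$. Since the left-hand side equals $1+\frac{1}{k}\cdot\frac{m-1}{m}$ and so decreases in $k$, it suffices to verify it at $k=4$, where the gap reduces to $\frac{1}{12}-\frac{1}{3(m-1)}+\frac{1}{4m} \ge 0$ (with equality at $m=3$), a one-line computation.

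This leaves the single genuinely hard case $k=3$, for which \eqref{GenBound} yields only Graham's weaker value $\frac{4}{3}-\frac{1}{3m}$. Here I would exploit the hypothesis $n \le 2m$ structurally. Writing $P$ for the load of the critical machine just before job $n$ (the sum of its first two jobs), all $m-1$ non-critical machines have load $\ge P$ by List Scheduling. Counting jobs, the non-critical machines carry $n-3 \le 2m-3 < 2(m-1)$ jobs, so at least one of them processes a \emph{single} job; call it $M_b$, with processing time $p_b \ge P$ and, by \eqref{TheOneOpt}, $p_b \le C_m^*$. The idea is then to mimic the LP-duality argument of Proposition \ref{kJobsMac}: set $C_m^{LPT}=P+p_n=1$, introduce non-negative variables for $P$, $p_n$, $p_b$, the aggregate load $t''$ of the remaining $m-2$ machines (with $t'' \ge (m-2)P$), and $C_m^*$, then minimize $C_m^*$ subject to the List-Scheduling inequalities, the averaging bound $C_m^* \ge \frac{1}{m}\sum_j p_j$, and $p_n \le \frac{C_m^*}{3}$.

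The main obstacle is precisely this $k=3$ subproblem. The constraints just listed are insufficient: pure averaging reproduces Graham's $\frac{1}{3m}$ rather than the required $\frac{1}{3(m-1)}$, so the improvement must come from a valid \emph{packing} inequality recording that the large jobs cannot all be co-scheduled in an optimal solution. Concretely, because $M_b$ carries a single job of size $p_b \ge P$ comparable to $C_m^*$, the optimal schedule is forced to waste capacity, which pushes $C_m^*$ strictly above the naive average; a direct check confirms that the spurious ``Graham configuration'' underlying the averaging bound is not realized by List Scheduling when $n \le 2m$. I would add such an inequality to the model, so that the optimal primal reflects the genuine worst case, and then certify the value $C_m^* = \frac{3(m-1)}{4m-5}$ (giving ratio $\frac{4m-5}{3(m-1)} = \frac{4}{3}-\frac{1}{3(m-1)}$) by exhibiting matching primal and dual solutions, exactly as in Proposition \ref{kJobsMac}.

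Finally, I expect the uniform LP argument to require $m$ sufficiently large for dual feasibility (the analogue of the condition $m \ge k+2$ in Proposition \ref{kJobsMac}); the remaining small values $m \in \{2,3,4\}$ I would settle by the same partial enumeration of LPT and optimal configurations used for Proposition \ref{mthfo}. In particular, for $m=2$ the target bound is $1$, and a short direct argument suffices: when $k=3$ the single non-critical job $p_b$ satisfies $p_b \le C_m^{LPT}$, so it must sit alone in any optimal schedule, forcing $C_m^* = C_m^{LPT}$ and hence optimality of LPT for every $n \le 4$.
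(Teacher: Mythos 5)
Your reduction to a case analysis on $k$ (the number of jobs on the critical machine) is sound for $k\le 2$ and $k\ge 4$, but it funnels the entire content of the proposition into the $k=3$ case, and there your argument stops at a plan rather than a proof. You correctly diagnose that the LP you write down (List-Scheduling inequalities, the averaging bound $C_m^*\ge\frac{1}{m}\sum_j p_j$, $p_n\le C_m^*/3$, $p_b\ge P$, $p_b\le C_m^*$) only certifies Graham's $\frac{4}{3}-\frac{1}{3m}$: indeed its optimum is $opt=\frac{3m}{4m-1}$, with the constraint $p_b\le C_m^*$ slack. But the ``valid packing inequality'' that is supposed to close the gap is never stated, the claim that ``a direct check confirms'' the Graham configuration is not realized is not an argument, and the primal/dual certificates for the target value $\frac{3(m-1)}{4m-5}$ are not exhibited. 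Since this is precisely the step where the bound improves from $\frac{1}{3m}$ to $\frac{1}{3(m-1)}$, the proof is incomplete at its core. (Two smaller points: your $k\ge 4$ inequality check is done only for $m\ge 3$, which is fine since that case is vacuous for $m=2$, but you should say so; and the deferral of ``small $m$'' to a partial enumeration you do not carry out adds a second open end.)

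The paper avoids the LP machinery here entirely. For $n=2m$ (pad with dummy jobs otherwise): if every machine of an optimal schedule carries at most two jobs, LPT is already optimal; otherwise some optimal machine carries three jobs, so by pigeonhole job $1$ can be assumed to sit alone on a machine in an optimal schedule, giving $C_m^*(\mathcal{J})\ge C_{m-1}^*(\mathcal{J}\setminus\{1\})$. Symmetrically, if LPT's critical machine has more than two jobs (the only case not settled by the $r_2$ bound), job $1$ sits alone and non-critically in the LPT schedule, so $C_m^{LPT}(\mathcal{J})=C_{m-1}^{LPT}(\mathcal{J}\setminus\{1\})$. Graham's bound applied to the instance $(\mathcal{J}\setminus\{1\},\,m-1)$ then yields $\frac{4}{3}-\frac{1}{3(m-1)}$ in one line. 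This reduction to $m-1$ machines is exactly the structural fact your LP is missing; if you want to salvage your route, the inequality you need to inject is the one this reduction encodes, not a generic lower bound on $C_m^*$. As written, your proposal has a genuine gap.
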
 
\begin{proof}
We denote by  $C_{m}^{LPT}(\mathcal{J})$ the makespan given by LPT on jobset $\mathcal{J} = \{1,\dots, n\}$, with the jobs ordered by non-increasing $p_j$ $(j=1,\dots,n)$. We consider the case $n=2m$ only. All other cases $n < 2m$ can be reduced to the previous one after adding $2m - n$ dummy jobs with null processing time.\\
It is well known (see, e.g., \cite{Graham69}) that if each machine processes two jobs at most in an optimal schedule, the solution provided by LPT would be optimal. Hence, we consider the case where there is one machine processing at least three jobs in an optimal solution. This situation straightforwardly implies that job $1$ has to be processed alone on a machine. Therefore, we have $C_{m}^*(\mathcal{J}) \geq C_{m-1}^*(\mathcal{J} \setminus \{1\})$ since the optimal makespan with $m$ machines could be as well given by the machine processing only job 1.\\
On the other hand, to contradict the claim, LPT must have the critical machine processing more than two jobs, otherwise we could use the bound of Property \eqref{Chen2m}. This implies that job $1$ is processed alone on a machine and cannot give the makespan, otherwise LPT solution would be optimal due to Property \eqref{TheOneOpt}. We thus have $C_{m}^{LPT}(\mathcal{J}) = C_{m-1}^{LPT}(\mathcal{J} \setminus \{1\})$. Combining these results with Graham's bound on the problem instance with $m -1$ machines and without job 1, we get
\begin{align}
& \frac{C_{m}^{LPT}(\mathcal{J})}{C_{m}^*(\mathcal{J})} \leq \frac{C_{(m-1)}^{LPT}(\mathcal{J} \setminus \{1\})}{C_{(m-1)}^*(\mathcal{J} \setminus \{1\})} \leq \frac{4}{3} - \frac{1}{3(m-1)}. \nonumber
\end{align}
\end{proof}
For instances with exactly $2m + 1$ jobs, we provide the following proposition.
\begin{proposition}
\label{LPT2m+1}
In instances with $n = 2m + 1$, if LPT loads at least three jobs on a machine before the critical job, then the approximation ratio is not superior to $\frac{4}{3} - \frac{1}{3(m-1)}$. 
\end{proposition}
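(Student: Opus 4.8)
The plan is to classify the instance according to how many jobs LPT places on the critical machine and to reduce each case to one of the bounds already established, assuming as usual that the critical job is $j^\prime = n = 2m+1$. If the critical machine carries a single job, that job is $n$ and the makespan equals $p_n \leq p_1 \leq C_m^*$ by \eqref{TheOneOpt}, so LPT is optimal. If it carries exactly two jobs, the tight ratio \eqref{Chen2m}, namely $r_2 \leq \frac{4}{3} - \frac{1}{3(m-1)}$, settles the claim immediately.

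Next I would treat the case where the critical machine carries at least four jobs. Then the a-posteriori bound \eqref{GenBound} applies with $k \geq 4$, and since $\frac{k+1}{k} - \frac{1}{km}$ is decreasing in $k$ the worst value is attained at $k=4$, giving $r^{LPT} \leq \frac{5}{4} - \frac{1}{4m}$. It then remains to verify $\frac{5}{4} - \frac{1}{4m} \leq \frac{4}{3} - \frac{1}{3(m-1)}$, which after clearing denominators reduces to $m(m-1) \geq m+3$, i.e. to $(m-3)(m+1) \geq 0$, hence holds for every $m \geq 3$.

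The only remaining branch is when the critical machine carries exactly three jobs, so exactly two jobs precede the critical job on it. Here the hypothesis of the proposition does the real work: since some machine receives at least three jobs before the critical job while the critical machine receives only two, that machine must be non-critical. I would then invoke Proposition \ref{kJobsMac} with $k = 3$, which yields exactly $r^{LPT} \leq \frac{4}{3} - \frac{1}{3(m-1)}$, valid as soon as $m \geq k+2 = 5$.

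The hard part is the residual low-machine sub-case $m \in \{3,4\}$ of this last branch, where Proposition \ref{kJobsMac} is not available. I would dispatch it as the earlier small-$m$ instances were dispatched: either by augmenting the LP \eqref{eq:ObjOpt}--\eqref{eq:TheVar} with the extra structural constraints forced by having only $2m+1$ jobs in total (a non-critical machine with three jobs, a critical machine with three jobs, and only $2m-5$ jobs left to distribute among the remaining $m-2$ machines), or by the partial enumeration of optimal/LPT configurations used in the proof of Proposition \ref{mthfo}; a reduction that removes job $1$ when it is processed alone, in the spirit of Proposition \ref{LPT2m}, may also shorten the bookkeeping. Since the number of surviving configurations is tiny for $m=3$ and $m=4$, I expect this to close the gap, but it is the single step requiring genuine case work rather than a direct appeal to an existing bound.
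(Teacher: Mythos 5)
Your case split on the number of jobs carried by the critical machine correctly disposes of most of the statement: one job gives optimality, two jobs gives bound \eqref{Chen2m}, at least four jobs gives $r_k$ with $k\geq 4$ (and your algebra $\frac{5}{4}-\frac{1}{4m}\leq\frac{4}{3}-\frac{1}{3(m-1)}\iff (m-3)(m+1)\geq 0$ is right), and exactly three jobs is settled by Proposition \ref{kJobsMac} with $k=3$ when $m\geq 5$. But the proof is not complete. For $m\in\{3,4\}$ with exactly three jobs on the critical machine you only list candidate strategies (an augmented LP, partial enumeration, a job-1 reduction) without executing any of them, and you say yourself that this is the step requiring genuine work. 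Since $m=3$ and $m=4$ are legitimate instances of the proposition and Proposition \ref{kJobsMac} genuinely fails to cover them, this residual sub-case is a real gap, not a routine verification left to the reader.

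The paper closes the statement for all $m\geq 3$ at once with an observation your decomposition misses: with only $2m$ jobs preceding the critical one, if some machine receives at least three of them then by pigeonhole some machine receives exactly one, and under List Scheduling that machine must be the one holding job $1$; hence job $1$ is either alone or paired with the critical job only. The second alternative is exactly the two-job critical machine case, handled by \eqref{Chen2m}. In the first, the machine holding job $1$ cannot be critical (else LPT is optimal by \eqref{TheOneOpt}), so $C_{m}^{LPT}(\mathcal{J}) = C_{m-1}^{LPT}(\mathcal{J}\setminus\{1\})$ and $C_{m}^{LPT}\leq p_1+p_n$; then either the optimum pairs job $1$ with some other job, forcing $C_{m}^*\geq p_1+p_n$ and hence optimality of LPT, or it leaves job $1$ alone, giving $C_{m}^*(\mathcal{J})\geq C_{m-1}^*(\mathcal{J}\setminus\{1\})$, and Graham's bound applied to the $(m-1)$-machine instance yields exactly $\frac{4}{3}-\frac{1}{3(m-1)}$. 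This is the same reduction used in Proposition \ref{LPT2m}, it is uniform in $m$, and it eliminates the small-$m$ case work your plan leaves open; if you want to salvage your route, that is the idea you need for the missing branch.
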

\begin{proof}
If LPT schedules at least three jobs on a machine before critical job $n$, this means that job 1 is processed either alone on a machine or with critical job $n$ only.  In the latter case, the claim is showed through the bound of Property \eqref{Chen2m}. Alternatively, job 1 is processed alone on machine $M_1$. Also, $M_1$ cannot give the makespan, otherwise LPT would yield an optimal solution. This implies that $C_{m}^{LPT}(\mathcal{J}) = C_{m-1}^{LPT}(\mathcal{J} \setminus \{1\})$ and that a trivial upper bound on the LPT solution value is equal to $p_1 + p_n$. In this case, if an optimal solution schedules job 1 with another job, we have $C_{m}^* \geq p_1 + p_n$ and thus LPT also gives the optimal makespan or else a contradiction on the optimal solution would occur. If an optimal solution schedules job 1 alone on a machine, then inequality $C_{m}^*(\mathcal{J}) \geq C_{m-1}^*(\mathcal{J} \setminus \{1\})$ holds. Combining these results with Graham's bound as in Proposition \ref{LPT2m}, we have 
\begin{align}
\frac{C_{m}^{LPT}(\mathcal{J})}{C_{m}^*(\mathcal{J})} \leq \frac{C_{(m-1)}^{LPT}(\mathcal{J} \setminus \{1\})}{C_{(m-1)}^*(\mathcal{J} \setminus \{1\})} \leq \frac{4}{3} - \frac{1}{3(m-1)}. \nonumber
\end{align}
\end{proof}
Summarizing, we have shown that LPT has an approximation ratio not superior to $\frac{4}{3} - \frac{1}{3(m-1)}$ in all instances with $m \geq 3$ and $n \neq 2m + 1$. Also, LPT can actually hit Graham's bound only in instances with $2m + 1$ jobs where only the critical machine processes three jobs and all the other machines process two jobs.

\subsection{Improving the LPT bound: Algorithm \Algo} \label{BeatingGraham}
We consider a slight algorithmic variation of LPT where a subset of the sorted jobs is first loaded on a machine and then LPT is applied to the remaining jobset. We denote this variant as $LPT(\mathcal{S})$ where $\mathcal{S}$ represents the set of jobs assigned alltogether to a machine first. 
Consider the following procedure.
\begin{algorithm}[H]
\begin{algorithmic}[1]
\STATEx \textbf{Input:} $P_m || C_{max}$ instance with $n$ jobs and $m$ machines.
\STATE Apply LPT yielding a schedule with makespan $z_1$ and $k-1$ jobs on the critical machine before job $j^\prime$.
\STATE Apply $LPT^\prime = LPT(\{j^\prime\})$ with solution value $z_2$.
\STATE Apply $LPT^{\prime\prime} = LPT(\{(j^\prime - k + 1), \dots, j^\prime\})$ with solution value $z_3$.
\STATE Return $\min\{z_1, z_2, z_3\}$.
\end{algorithmic}
\caption{\textbf{Algorithm \Algo}}
\end{algorithm}
In practice, \Algo algorithm applies $LPT$ first and then re-applies $LPT$ \textit{after} having loaded on a machine first either its critical job $j^\prime$ alone or the tuple of $k$ jobs $(j^\prime - k+1),...,j^\prime$. \\
In the following we will show that algorithm \Algo improves the longstanding Graham's bound from $\frac{4}{3} - \frac{1}{3m}$ to $\frac{4}{3} - \frac{1}{3(m-1)}$ for $m \geq 3$. 
For the performance analysis of algorithm \Algo, before addressing the remaining instances with $n = 2m+1$, we claim that the critical job in any LPT schedule can be again assumed to be $n$. Consider instances where there is a set of jobs loaded after the critical job $j^\prime$. If one of these jobs in not critical in either $LPT^\prime$ or $LPT^{\prime\prime}$ schedule, our claim would be already showed since further jobs after $j^\prime$ can only increase the optimal makespan without affecting the solution value of \Algo. Alternatively, the following proposition holds.
\begin{proposition}
\label{OtherJobs}
In $P_m || C_{max}$ instances where there are jobs processed after the critical job in the LPT solution and one of such jobs is critical in either $LPT^\prime$ or $LPT^{\prime\prime}$ schedules, \Algo algorithm has a performance guarantee of $\frac{4}{3} - \frac{7m - 4}{3(3m^2 + m - 1)}$. 
\end{proposition}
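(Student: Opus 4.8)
The plan is to reproduce the linear-programming analysis of Proposition \ref{kJobsMac}, now applied to whichever re-run ($LPT^\prime$ or $LPT^{\prime\prime}$) is responsible for a post-$j^\prime$ critical job; I would write out the $LPT^\prime$ case in full and indicate the parallel adaptation for $LPT^{\prime\prime}$ at the end. First I would normalise the configuration: among the jobs processed after $j^\prime$ in the original LPT schedule, let $\hat{\jmath}$ be the one that is critical in $LPT^\prime$, and note that every job placed after $\hat{\jmath}$ may be discarded, since it leaves $z_2$ unchanged while only increasing $C_m^*$; hence I may assume $\hat{\jmath}=n$, so that $p_n\le p_{j^\prime}$. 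If the critical machine of the original schedule carried a single job then LPT would already be optimal by \eqref{TheOneOpt}, so I may assume $j^\prime$ sits in position $\ge 2$ there, and property \eqref{ChenRes} gives the record $p_n\le p_{j^\prime}\le C_m^*/2$.

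Next I would assemble the linear inequalities that pin down the worst case. Two are solid and carry the improvement over Graham's bound. Since $j^\prime$ is critical, no job after it can lie on the critical machine, so all of the post-$j^\prime$ work $W=\sum_{j>j^\prime}p_j$ lands on non-critical machines already loaded to at least $z_1-p_{j^\prime}$; tracing the final loads sharpens the total-work estimate to $z_1\le C_m^*+\tfrac{m-1}{m}p_{j^\prime}-\tfrac1m W$, so the extra work explicitly depresses $z_1$. For the re-run, $n$ lies in the List Scheduling phase of $LPT^\prime$ and is assigned to the least loaded machine, so all machines are loaded to at least $z_2-p_n$ when $n$ arrives, while the machine pre-loaded with $j^\prime$ carries at least $p_{j^\prime}\ge p_n$; this yields $z_2\le C_m^*+\tfrac{m-1}{m}p_n$ together with the accompanying load relations, and the fact that $n$ is critical in a high position links $W$ back to $p_n$ and $p_{j^\prime}$. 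I would then build an LP exactly as in \eqref{eq:ObjOpt}--\eqref{eq:TheVar}: set $\min\{z_1,z_2\}=1$, introduce non-negative variables for the machine loads, for $p_n$, $p_{j^\prime}$, $sum_p$ and $opt$, encode the inequalities above, and minimise $opt$.

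Finally, as in Proposition \ref{kJobsMac}, I would certify the minimum by exhibiting a primal feasible point and a matching dual feasible point, so that strong duality pins $opt$ at $\tfrac{3m^2+m-1}{4m^2-m}$, whence $\min\{z_1,z_2\}/C_m^*\le 1/opt=\tfrac{4m^2-m}{3m^2+m-1}=\tfrac{4}{3}-\tfrac{7m-4}{3(3m^2+m-1)}$; the $LPT^{\prime\prime}$ case runs through the same LP with the single pre-load of $j^\prime$ replaced by the tuple $(j^\prime-k+1),\dots,j^\prime$, which only tightens the load constraints on the pre-loaded machine and hence cannot worsen the bound. I expect the main obstacle to be precisely the faithful linearisation of the two-phase dynamics, namely the pre-loading step followed by List Scheduling, coupled with the interaction between $z_1$ and $z_2$ through the shared quantity $W$: a valid lower bound on $W$ in terms of $p_n$, $p_{j^\prime}$ and the machine structure is what closes the system, and the quadratic numerator and denominator of the target ratio signal that the binding configuration couples the number of machines loaded above $z_2-p_n$ with their common load, so the tight constraint is effectively a product and the dual certificate must be verified with the same care that the condition $m\ge k+2$ demanded in Proposition \ref{kJobsMac}.
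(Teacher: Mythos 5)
There is a genuine gap: the LP you describe is assembled only from additive List Scheduling bounds, and these are provably too weak to certify the claimed ratio. Writing $W=\sum_{j>j^\prime}p_j$ and normalising $C_m^*=1$, the constraints you actually state amount to $z_1\le 1+\frac{m-1}{m}p_{j^\prime}-\frac{W}{m}$, $z_2\le 1+\frac{m-1}{m}p_n$, $p_n\le W$, and $p_n\le p_{j^\prime}\le\frac{1}{2}$. The point $p_{j^\prime}=\frac{1}{2}$, $p_n=W=\frac{m-1}{2m}$ is feasible for this system and gives $\min\{z_1,z_2\}=1+\frac{(m-1)^2}{2m^2}$; for $m=3$ this is $\frac{11}{9}\approx 1.222$, larger not only than the claimed $\frac{4m^2-m}{3m^2+m-1}=\frac{33}{29}\approx 1.138$ but even than the threshold $\frac{4}{3}-\frac{1}{3(m-1)}=\frac{7}{6}$ that Theorem \ref{OurTheorem} needs from this proposition. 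No dual certificate can pin the optimum at your target value when a feasible primal point exceeds it. The ingredient you are missing --- which you yourself flag as ``what closes the system'' but never supply --- is the paper's inequality \eqref{GrahamInd}: Graham's $\bigl(\frac{4}{3}-\frac{1}{3m}\bigr)$ theorem is invoked on the truncated instance $\{1,\dots,j^\prime\}$, giving $z_1\le(1-\beta)\frac{\sum_j p_j}{m}\bigl(\frac{4}{3}-\frac{1}{3m}\bigr)$ with $\beta=W/\sum_j p_j$. This makes the \emph{entire} bound on $z_1$, excess term included, shrink multiplicatively with $(1-\beta)$, whereas your version only subtracts $W/m$ while the excess $\frac{m-1}{m}p_{j^\prime}$ stays put. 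Invoking a worst-case theorem on a sub-instance cannot be expressed as a linear inequality in your variables without re-proving it, which is precisely why the paper does not cast this particular proposition as an LP at all but as a two-line balancing argument in $\beta$.

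The other half of your balance is essentially the paper's: the critical job of $LPT^\prime$ is one of the post-$j^\prime$ jobs, so its processing time is at most $W\le m\beta C_m^*$ and $z_2\le C_m^*\bigl(1+(m-1)\beta\bigr)$; equating this with $(1-\beta)\bigl(\frac{4}{3}-\frac{1}{3m}\bigr)$ at $\beta=\frac{m-1}{3m^2+m-1}$ yields the stated ratio. Two smaller points: discarding jobs scheduled after the $LPT^\prime$-critical job can only \emph{decrease} $C_m^*$ (that is what makes the reduced instance at least as bad), not increase it as you wrote; and your closing claim that the $LPT^{\prime\prime}$ case ``only tightens the load constraints and hence cannot worsen the bound'' is asserted rather than argued --- the paper disposes of it by noting that the two inequalities $t_{c^\prime}+\frac{p_i}{m}\le C_m^*$ and $p_i\le\beta\sum_j p_j$ hold verbatim for $LPT^{\prime\prime}$, so the same bound on $z_3$ follows.
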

\begin{proof}
Denote by $\beta \sum\limits_{j=1}^{n} p_j$ the overall processing time of jobs $j^\prime + 1, \dots, n$, with $0 < \beta < 1$. Due to Graham's bound, the following relation holds for LPT when only jobs $1, \dots, j^\prime$ are considered:
\begin{align}
\label{GrahamInd}
&C_{m}^{LPT} \leq \frac{\sum\limits_{j=1}^{j^\prime} p_j}{m}\left(\frac{4}{3} - \frac{1}{3m}\right) = \frac{(1 - \beta)\sum\limits_{j=1}^{n} p_j}{m}\left(\frac{4}{3} - \frac{1}{3m}\right)
\end{align}
From \eqref{GrahamInd} we have:
\begin{align}
\label{LPTRHO}
&\frac{C_{m}^{LPT}}{C_{m}^*} \leq \frac{C_{m}^{LPT}}{\frac{\sum\limits_{j=1}^{n} p_j}{m}} \leq (1 - \beta)\left(\frac{4}{3} - \frac{1}{3m}\right) 
\end{align}
We introduce a target LPT approximation ratio denoted as $\rho$ and identify the value of $\beta$ which gives such a bound. We have:
\begin{align}
\label{BetaVal}
& (1 - \beta)\left(\frac{4}{3} - \frac{1}{3m}\right) = \rho \implies  \beta = 1 - \frac{3m\rho}{4m - 1}
\end{align}
Consider now the solution provided by $LPT^\prime$. Denote by $i$ $(j^\prime + 1 \leq i \leq n)$ the corresponding critical job and by $t_{c^\prime}$ the processing time of the remaining jobs on the critical machine. Since the following relations hold
\begin{align}
& t_{c^\prime} + \frac{p_i}{m} \leq C_{m}^{*}; \nonumber \\ 
& p_i \leq \beta \sum\limits_{j=1}^{n} p_j \leq  m\beta C_{m}^{*}; \nonumber
\end{align}
we have, in combination with  \eqref{BetaVal}, that
\begin{align}
\label{LSprop}
& C_{m}^{LPT^\prime} = t_{c^\prime} + p_i = \left(t_{c^\prime} + \frac{p_i}{m}\right) + p_i\left(1 - \frac{1}{m}\right)\leq C_{m}^* +(m-1) \beta C_{m}^*  \nonumber \\ 
& \implies  \frac{C_{m}^{LPT^\prime}}{ C_{m}^*} \leq 1 + (m-1)\beta = 1 + (m-1)\left(1 - \frac{3m\rho}{4m - 1}\right).
\end{align}
Note that the same analysis may apply to $LPT^{\prime\prime}$. Hence, algorithm \Algo has a performance guarantee equal to $\min \{1 + (m-1)(1 - \frac{3m\rho}{4m - 1}); \rho \}$. This expression reaches its largest value when the two terms are equal, namely:
\begin{align}
\label{EqRho}
& 1 + (m-1)\left(1 - \frac{3m\rho}{4m - 1}\right) = \rho
\end{align}
From condition \eqref{EqRho} we derive
\begin{align}
& \rho = \frac{4m -1}{1 + 3m - \frac{1}{m}} = \frac{4}{3} - \frac{7m - 4}{3(3m^2 + m - 1)}\geq \frac{C_{m}^{\Algo}}{C_{m}^*}. \nonumber
\end{align}
\end{proof}

It easy to check that the bound of Proposition \ref{OtherJobs} is strictly inferior to $\frac{4}{3} - \frac{1}{3(m-1)}$ for $m \geq 3$. Thus, we assume in our analysis that any LPT schedule has the last job $n$ as critical job, i.e. $j^\prime = n$. \textit{This assumption is kept in the results of all next propositions \ref{CrucialIneq}--\ref{m2n5}}.

We proceed now with the analysis of instances with $2m+1$ jobs where LPT must couple jobs $1,\dots, m$ respectively with jobs $2m, \dots, m+1$ on the $m$ machines before scheduling job $2m + 1$. Otherwise, Proposition \ref{LPT2m+1} and correspondingly an approximation ratio $\leq \frac{4}{3} - \frac{1}{3(m-1)}$ would hold. Therefore, we will consider the following LPT schedules with pair of jobs on each machine $M_i$ $(i=1,\dots,m)$
\begin{align}
& M_1: p_{1}, p_{2m} \nonumber \\
& M_2: p_{2}, p_{2m-1} \nonumber \\
& \dots \nonumber \\
& M_{m-1}: p_{m - 1},  p_{m + 2} \nonumber \\
& M_m: p_{m},  p_{m + 1} \nonumber 
\end{align}
where job $2m+1$ will be assigned to the machine with the least completion time. We analyze the following two subcases.

\medskip
\paragraph{\textit{Case 1}:  $p_{(2m+1)} \geq p_{1} - p_{m}$}\mbox{} \\
In this case, the last job  $2m+1$ has a processing time greater than (or equal to) the difference $ p_{1} - p_{m}$. Consider $LPT^\prime$ heuristic with $j^\prime = 2m + 1$. The heuristic will assign jobs $2m+1, 1,\dots,m-1$ to machines $M_1, M_2, \dots M_m$ respectively. Then, job $m$ will be loaded on $M_1$ together with job $2m+1$.
Since $p_{(2m+1)} + p_{m} \geq p_{1} $, job $m + 1$ will be processed on the last machine $M_m$ after job $m-1$. Now we have $$p_{(m-1)} + p_{(m+1)} \geq p_{(2m+1)} + p_{m} \geq p_{1} $$ 
since $p_{(m-1)} \geq p_{m}$ and $p_{(m+1)} \geq p_{(2m+1)}$. Hence, job $m+2$ is loaded on machine $M_{(m-1)}$ with job $m-2$. Similarly as before, it follows that $p_{(m-2)} + p_{(m+2)} \geq p_{(2m+1)} + p_{m}$. Consequently, job $m+3$ is processed on $M_{(m-2)}$ after job $m-3$. By applying the same argument, $LPT^\prime$ will assign pair of jobs to each machine until job $2m - 1$ is assigned to $M_2$. 
Eventually, job $2m$ will be assigned to the first machine since it will be the least loaded machine at that point. Summarizing, $LPT^\prime$ will provide the following schedule: 
\begin{align}
& M_1: p_{(2m+1)}, p_m,  p_{2m} \nonumber \\
& M_2: p_{1}, p_{(2m-1)} \nonumber \\
& M_3: p_{2}, p_{(2m-2)} \nonumber \\
& \dots \nonumber \\
& M_{(m-1)}: p_{(m - 2)},  p_{(m + 2)} \nonumber \\
& M_m: p_{(m - 1)},  p_{(m + 1)} \nonumber 
\end{align}
Assume now that the critical machine is $M_1$ with completion time equal to $p_{(2m+1)} + p_m + p_{2m}$. The following proposition holds:
\begin{proposition}
\label{CrucialIneq}
If $p_{(2m+1)} \geq p_{1} - p_{m}$ and $C_{m}^{LPT^\prime} = p_{(2m+1)} + p_m + p_{2m} > C_{m}^*$,  
then $C_{m}^* \geq p_{(m-1)} + p_m$ in any optimal schedule. 
\end{proposition}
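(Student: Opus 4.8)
The plan is to argue by contraposition: rather than starting from $C_{m}^{LPT^\prime} > C_{m}^*$, I would assume $C_{m}^* < p_{m-1} + p_m$ and try to derive $C_{m}^* \geq C_{m}^{LPT^\prime} = p_{2m+1} + p_m + p_{2m}$, which contradicts the hypothesis. The motivating observation is that the target value $p_{m-1}+p_m$ is exactly the smallest possible load of a machine that carries two of the $m$ largest jobs $p_1,\dots,p_m$, so assuming $C_m^* < p_{m-1}+p_m$ rigidly constrains the shape of any optimal schedule.

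First I would show that, under the assumption $C_m^* < p_{m-1}+p_m$, no machine of the optimal schedule can process two jobs drawn from $\{p_1,\dots,p_m\}$: any such pair contributes a load of at least $p_{m-1}+p_m$ (the minimum being attained by the pair $p_{m-1},p_m$), which would exceed $C_m^*$. Since this set has exactly $m$ jobs and there are $m$ machines, each machine must then receive exactly one of them, and the remaining $m+1$ jobs $p_{m+1},\dots,p_{2m+1}$ are distributed as additional jobs over the $m$ machines.

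Next I would invoke the pigeonhole principle: $m+1$ jobs on $m$ machines force at least one machine to receive two of the jobs $p_{m+1},\dots,p_{2m+1}$, say $p_a$ and $p_b$. This machine also carries exactly one top job $p_s$ with $s\le m$, hence $p_s\ge p_m$. Because $p_{2m}$ and $p_{2m+1}$ are the two smallest jobs overall, any two distinct jobs satisfy $p_a+p_b\ge p_{2m}+p_{2m+1}$. Thus the load of this three-job machine is at least $p_m+p_{2m}+p_{2m+1}=C_{m}^{LPT^\prime}$, giving $C_m^*\ge C_{m}^{LPT^\prime}$, the desired contradiction.

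The main obstacle is identifying the right quantity to lower-bound. A purely volumetric estimate (bounding $C_m^*$ from below by the average load $\frac{1}{m}\sum_j p_j$) yields only $C_m^*\ge p_m+p_{m+1}$, which is too weak; the argument must instead exploit the indivisibility of jobs, namely that once the $m$ largest jobs are pinned one-per-machine, a third job is unavoidable on some machine and its two extra jobs cannot be smaller than the global minima $p_{2m},p_{2m+1}$. Recognising that this forced three-job machine reproduces precisely the critical load $p_m+p_{2m}+p_{2m+1}$ of $LPT^\prime$ is the crux, after which the remaining steps are elementary counting. I would finally remark that the Case~1 hypothesis $p_{2m+1}\ge p_1-p_m$ and the explicit $LPT^\prime$ pairing serve only to guarantee that the critical load of $LPT^\prime$ equals $p_{2m+1}+p_m+p_{2m}$, and play no further role in this particular implication.
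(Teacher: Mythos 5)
Your proposal is correct and follows essentially the same route as the paper's own proof: both argue by contradiction that if no machine of an optimal schedule carried two of the jobs $1,\dots,m$ (which would immediately give $C_m^* \geq p_{(m-1)}+p_m$), then each machine gets exactly one top job, pigeonhole forces a three-job machine of load at least $p_m + p_{2m} + p_{(2m+1)}$, and this contradicts $C_m^{LPT^\prime} > C_m^*$. Your write-up merely makes explicit the counting steps that the paper states tersely, and your closing remark that the Case~1 hypothesis is not otherwise used is accurate.
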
 
\begin{proof}
We prove the claim by contradiction. We assume that an optimal schedule assigns jobs $1, 2, \dots, m$ to different machines or else $C_{m}^* \geq p_{(m-1)} + p_m$ immediately holds. Correspondingly, since there exists a machine processing three jobs, the optimal makespan can be lower bounded by $p_m + p_{2m} + p_{(2m+1)}$. But as $p_m + p_{2m} + p_{(2m+1)} > C_{m}^*$ holds, a contradiction on the optimality of the schedule is implied. 
\end{proof}
The following proposition also holds.
\begin{proposition}
\label{Slack7/6}
If $p_{(2m+1)} \geq p_{1} - p_{m}$ and $C_{m}^{LPT^\prime} = p_{(2m+1)} + p_m + p_{2m}$, then algorithm \Algo has an approximation ratio not superior to $\frac{7}{6}$.
\end{proposition}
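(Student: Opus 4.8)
The plan is to bound only the quantity $z_2 = C_m^{LPT^\prime}$ computed by the second call inside \Algo, since \Algo returns $\min\{z_1,z_2,z_3\}\le z_2$; it therefore suffices to show $C_m^{LPT^\prime} = p_{(2m+1)} + p_m + p_{2m} \le \frac{7}{6}C_m^*$. If $C_m^{LPT^\prime}\le C_m^*$ the ratio is trivially below $\frac{7}{6}$, so I would immediately reduce to the case $C_m^{LPT^\prime} > C_m^*$.

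In that case I would assemble two lower bounds on $C_m^*$. The first is purely structural: with $n=2m+1$ jobs on $m$ machines, some machine of the optimal schedule carries at least three jobs, so its load, and hence $C_m^*$, is at least the sum of the three smallest jobs, giving $C_m^*\ge p_{(2m-1)} + p_{2m} + p_{(2m+1)}$. The second comes directly from Proposition \ref{CrucialIneq}, whose hypotheses (the Case 1 condition $p_{(2m+1)}\ge p_1-p_m$ together with $C_m^{LPT^\prime} > C_m^*$) are exactly those now in force, yielding $C_m^*\ge p_{(m-1)}+p_m$.

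I would then combine the two bounds additively and exploit the sorting of the jobs. Summing them and replacing $p_{(m-1)}$ by the no-larger $p_m$ and $p_{(2m-1)}$ by the no-larger $p_{2m}$ gives $2C_m^* \ge 2p_m + 2p_{2m} + p_{(2m+1)} = 2C_m^{LPT^\prime} - p_{(2m+1)}$, that is $C_m^{LPT^\prime} \le C_m^* + \tfrac12 p_{(2m+1)}$. Finally, the same three-job bound forces $p_{(2m+1)}\le \tfrac13 C_m^*$, since each of the three smallest jobs is at least $p_{(2m+1)}$; substituting this in yields $C_m^{LPT^\prime}\le C_m^* + \tfrac16 C_m^* = \tfrac{7}{6}C_m^*$, which closes the argument.

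The only genuine obstacle is selecting the right pairing of inequalities: neither lower bound alone is strong enough, and a naive combination overcounts $p_m$ and the two smallest jobs, leaving a ratio near $\tfrac32$. What makes the argument succeed is that the monotonicity of the sorted $p_j$ lets the sum of the two bounds dominate $2C_m^{LPT^\prime} - p_{(2m+1)}$, while the very same three-job bound simultaneously controls the residual term $p_{(2m+1)}$. Everything after identifying this combination is routine arithmetic, and the resulting bound holds uniformly in $m$.
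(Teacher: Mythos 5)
Your proof is correct, and it reaches the same constant $\frac{7}{6}$ from exactly the same two lower bounds on $C_m^*$ that the paper encodes as LP constraints: the bound $C_m^* \geq p_{(m-1)}+p_m$ supplied by Proposition \ref{CrucialIneq} (whose hypotheses you correctly verify are in force once you reduce to the nontrivial case $C_m^{LPT^\prime} > C_m^*$), and the pigeonhole bound $C_m^* \geq p_{(2m-1)}+p_{2m}+p_{(2m+1)}$ coming from $n=2m+1$ jobs on $m$ machines. The difference is in execution rather than substance. The paper formulates the LP \eqref{eq:ObjAlgo}--\eqref{eq:xjdef} with these constraints plus the sorting inequalities and simply reports that the solver returns $z^*=\frac{7}{6}$; for this particular proposition (unlike Propositions \ref{kJobsMac} and \ref{First2m-1}) no explicit dual certificate is given. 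Your argument supplies, in effect, that missing certificate: averaging the two OPT bounds, using monotonicity to replace $p_{(m-1)}$ and $p_{(2m-1)}$ by $p_m$ and $p_{2m}$, and then controlling the leftover $\tfrac12 p_{(2m+1)}$ term by a second appeal to the three-job bound corresponds to putting total dual weight $\tfrac12+\tfrac12+\tfrac16=\tfrac76$ on the OPT constraints. What the paper's route buys is uniformity of method across all its propositions and the assurance (via the LP optimum) that $\frac{7}{6}$ is the best bound obtainable from this constraint set; what your route buys is a short, solver-free, hand-checkable derivation. One small presentational point: you never use the hypothesis $p_{(2m+1)}\geq p_1-p_m$ directly in the arithmetic, only indirectly through Proposition \ref{CrucialIneq}; that is consistent with the paper, where constraint \eqref{eq:2.a1} is likewise not needed to certify the value $\frac{7}{6}$.
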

\begin{proof}
We again employ Linear Programming to evaluate the performance of $LPT^\prime$. More precisely, we consider an LP formulation with non-negative variables $p_j$ ($j=1,\dots,n$) denoting the processing times and a positive parameter $OPT > 0$ associated with $C_{m}^*$. 
The corresponding LP model for evaluating the worst case performance of $LPT^\prime$ heuristic is as follows:
{\small
\begin{align}
   \text{maximize}\quad & p_{(2m+1)} + p_m + p_{2m}\label{eq:ObjAlgo}\\
	\text{subject to}\quad
	& p_{(m-1)} + p_m  \leq OPT  \label{eq:FirstMinVal}\\
	& p_{(2m-1)} + p_{2m} + p_{(2m+1)} \leq OPT \label{eq:SecMinVal}\\
	& p_{(2m+1)} - (p_{1} - p_{m}) \geq 0 \label{eq:2.a1} \\
	& p_1 - p_{(m-1)} \geq 0 \label{eq:pjsorted1}\\
	& p_{(m-1)} - p_m \geq 0  \label{eq:pjsorted2}\\
	& p_m - p_{(m+1)} \geq 0 \label{eq:pjsorted3}\\
	& p_{(m+1)} - p_{(2m-1)} \geq 0 \label{eq:pjsorted4}\\
    & p_{(2m-1)} - p_{2m} \geq 0  \label{eq:pjsorted5}\\
     & p_{2m} - p_{(2m + 1)} \geq 0  \label{eq:pjsorted6}\\
	& p_1, p_{(m-1)}, p_m, p_{(m+1)}, p_{(2m-1)}, p_{2m}, p_{(2m+1)}  \geq 0 \label{eq:xjdef} 
\end{align}}
The objective function value (\ref{eq:ObjAlgo}) represents an upper bound on the worst case performance of the algorithm. 
Constraints (\ref{eq:FirstMinVal})--(\ref{eq:SecMinVal}) state that the optimal value $C_{m}^*$ is lower bounded according to Proposition \ref{CrucialIneq} and by the sum of the three jobs with the smallest processing times. 
Constraint (\ref{eq:2.a1}) simply represents the initial assumption $p_{(2m+1)} \geq p_{1} - p_{m}$. Constraints (\ref{eq:pjsorted1})--(\ref{eq:pjsorted6}) state that the considered relevant jobs are sorted by non-increasing processing times while constraints (\ref{eq:xjdef}) indicate that the variables are non-negative. 
We remark that parameter $OPT$ can be arbitrarily set to any value $> 0$. Further valid inequalities (such as $p_{(2m+1)} + p_m + p_{2m} \geq p_{1} + p_{(2m-1)}$ or $OPT \geq \frac{\sum\limits_{j=1}^{n} p_j}{m}$) were omitted as they do not lead to any improvement on the worst case performance ratio. Notice that the number of variables of the reduced LP formulation is constant for any value of $m$.\\
By setting w.l.o.g. $OPT = 1$ and solving model (\ref{eq:ObjAlgo})--(\ref{eq:xjdef}), we get an optimal solution value $z^*$ equal to $1.1666... = \frac{7}{6}$. Correspondingly, the approximation ratio is $\frac{z^*}{OPT} = \frac{7}{6}$.
\end{proof}
Consider now the case where the makespan of $LPT^\prime$ schedule is given by one of the machines $M_2, \dots, M_m$. In such a case, a trivial upper bound on  $LPT^\prime$ makespan is equal to $p_1 + p_{m+1}$. 
We state the following proposition.
\begin{proposition}
\label{First2m-1}
If $p_{(2m+1)} \geq p_{1} - p_{m}$ and the makespan of $LPT^\prime$ is not on $M_1$, then coupling $LPT$ with $LPT^\prime$ gives a performance guarantee not superior to $\frac{15}{13}$ for $m=3$ and $\frac{4}{3} - \frac{1}{2m-1}$ for $m \geq 4$.
\end{proposition}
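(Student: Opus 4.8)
The plan is to bound the quantity $\min\{z_1,z_2\}=\min\{C_{m}^{LPT},C_{m}^{LPT^\prime}\}$, which upper bounds the value $C_{m}^{\Algo}=\min\{z_1,z_2,z_3\}$ actually returned by \Algo, reusing the Linear Programming methodology already employed in Propositions~\ref{kJobsMac} and~\ref{Slack7/6}. Since the makespan of $LPT^\prime$ is assumed here to fall on one of the two-job machines $M_2,\dots,M_m$, the starting observation is the clean estimate $C_{m}^{LPT^\prime}\le p_1+p_{m+1}$, because on each such machine the first job has processing time at most $p_1$ and the second at most $p_{m+1}$. The difficulty is that coupling with $LPT$ forces me to keep an accurate handle on $C_{m}^{LPT}$ as well, and not merely a Graham-type estimate.

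First I would describe the $LPT$ schedule through its pair loads $L_i=p_i+p_{2m+1-i}$ ($i=1,\dots,m$): since job $2m+1$ is placed on the least loaded machine and is the critical job, $C_{m}^{LPT}=\min_i L_i+p_{2m+1}$, and the fact that $2m+1$ is critical is exactly the condition $\min_i L_i+p_{2m+1}\ge \max_i L_i$, i.e.\ the final load of the critical machine dominates every pair load. I would then set up an LP that, after normalizing $C_{m}^*$ to a fixed value, maximizes a variable $z$ playing the role of the minimum of the two makespans, subject to $z\le \ell+p_{2m+1}$ and $z\le p_1+p_{m+1}$, where an auxiliary variable $\ell$ with $\ell\le L_i$ for all $i$ represents $\min_i L_i$ (the maximization pushes $\ell$ to $\min_i L_i$), and the criticality inequalities $\ell+p_{2m+1}\ge L_i$ are imposed for every $i$. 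To these I would add the valid lower bounds on $C_{m}^*$ used throughout Section~\ref{LPTrevisiting}: the bounds $p_1\le C_{m}^*$ and $\sum_{j=1}^{2m+1}p_j\le m\,C_{m}^*$ from Proposition~\ref{LitProp1}, the pigeonhole bound $p_m+p_{m+1}\le C_{m}^*$, and the three-smallest-jobs bound $p_{2m-1}+p_{2m}+p_{2m+1}\le C_{m}^*$ as in Proposition~\ref{Slack7/6}. The sorting inequalities $p_1\ge\dots\ge p_{2m+1}\ge 0$ and the case hypothesis $p_{2m+1}\ge p_1-p_m$ complete the model. Because every constraint is a valid inequality, the LP optimum is a genuine upper bound on $C_{m}^{\Algo}/C_{m}^*$.

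I would then solve this parametric LP in closed form, exhibiting an optimal primal solution together with a matching dual solution and invoking strong duality, exactly as in the proof of Proposition~\ref{kJobsMac}. I expect two regimes to emerge. For $m=3$ the worst case has all pair loads equalized, so the criticality constraint is inactive and the optimum is governed by the averaging constraint; this yields the larger value $\frac{15}{13}$. For $m\ge 4$ a different, $m$-dependent set of constraints binds — the averaging constraint together with the criticality condition relating $\min_i L_i$ and $\max_i L_i$ — and the optimum drops to $\frac{4}{3}-\frac{1}{2m-1}$. This change of active constraints is precisely why the two cases must be stated separately, and it also explains why the $m=3$ value does not follow the general formula.

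The main obstacle I anticipate is the faithful modeling of $C_{m}^{LPT}$. A naive estimate such as $C_{m}^{LPT}\le p_m+p_{m+1}+p_{2m+1}$, although valid, is too loose: dropping the criticality relation inflates the LP optimum well beyond the target (one checks on small instances that it rises to $\tfrac54$), so the coupling of $\min_i L_i$ with $\max_i L_i$ is indispensable. The remaining delicate point is that the ``middle'' jobs $p_2,\dots,p_{m-1}$ and $p_{m+2},\dots,p_{2m-1}$ enter both the averaging constraint (as budget) and the pair loads $L_i$ (hence both $\min_i L_i$ and $\max_i L_i$); their worst-case values, and with them the binding constraints, shift with $m$. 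Producing dual certificates valid uniformly for all $m\ge 4$, while disposing of $m=3$ by a direct constant-size LP, is where the real work lies.
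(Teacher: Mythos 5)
Your setup is essentially the paper's own proof: the paper's LP for this case uses exactly your variables (its $\alpha$ is your $\ell$, constrained by $\alpha \le p_j + p_{(2m-j+1)}$ for $j=1,\dots,m$), the two upper bounds $y \le \alpha + p_{(2m+1)}$ and $y \le p_1 + p_{(m+1)}$, the averaging and three-smallest-jobs lower bounds on $C_m^*$, the sorting chain, and the case condition, and it concludes just as you propose---an explicit primal solution matched by a dual certificate for $m \ge 4$, with $m=3$ solved directly to get $\tfrac{15}{13}$. Two remarks. First, the substance of the paper's proof is precisely the pair of closed-form primal and dual solutions (value $\tfrac{8m-7}{3(2m-1)} = \tfrac{4}{3} - \tfrac{1}{2m-1}$) that you defer; as a plan your argument is sound, but as a proof it is complete only once those certificates are written down. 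Second, your claim that the criticality inequalities $\ell + p_{(2m+1)} \ge L_i$ are indispensable is contradicted by the paper: its LP omits them entirely and still certifies the target value by strong duality, so they are harmless extra valid cuts but not needed; what actually does the work is having $\alpha \le L_i$ for \emph{all} $i$ together with the budget constraint $\sum_j p_j \le m\,C_m^*$ (your ``naive estimate'' objection applies only to keeping a single pair load).
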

\begin{proof}
We again consider an LP formulation with non-negative variables $p_j$ ($j=1,\dots,n$), a positive parameter $OPT > 0$ and two non-negative auxiliary variables $\alpha, y$. We can evaluate the worst case performance of $LPT + LPT^\prime$ by the following LP model
{\small
\begin{align}
   \text{maximize}\quad & y \label{eq:ObjCase3bis}\\
	\text{subject to}\quad
	& \sum\limits_{j=1}^{2m + 1} p_j \leq mOPT  \label{eq:FirstOPT}\\
	& p_{(2m-1)} + p_{2m} + p_{(2m+1)} \leq OPT \label{eq:SecOPT}\\
	& p_{(j + 1)} - p_j \leq 0 \quad j=1, \dots, 2m; \label{eq:pjsort}\\
	& p_{j} + p_{(2m - j + 1)} - \alpha \geq 0 \quad j=1, \dots, m; \label{eq:LPT2m1} \\
	& p_{(2m+1)} + \alpha - y \geq 0 \label{eq:LPTobj}\\
	& p_{(2m+1)} - (p_1 - p_m) \geq 0 \label{eq:CaseCond}\\
     & p_{1} + p_{(m + 1)} - y \geq 0  \label{eq:UBLPTprime}\\
	& p_j \geq 0 \quad j=1, \dots, 2m + 1; \label{eq:xj} \\
	& \alpha, y \geq 0 \label{eq:alpha_y} 
\end{align}}
where $y$ represents the solution value reached by $LPT + LPT^\prime$ and $\alpha$ is the starting time of job $2m + 1$ in LPT.
Constraints \eqref{eq:LPT2m1} indicate that $\alpha$ corresponds to the smallest load on a machine after processing jobs $1,\dots,2m$. The solution value of LPT is therefore the sum $\alpha + p_{(2m + 1)}$. The objective function \eqref{eq:ObjCase3bis} provides an upper bound on the makespan of $LPT + LPT^\prime$ since it maximizes the minimum between $\alpha + p_{(2m + 1)}$ and the makespan reached by $LPT^\prime$ 
through variable $y$ and related constraints \eqref{eq:LPTobj} and \eqref{eq:UBLPTprime}. Constraints \eqref{eq:FirstOPT} and \eqref{eq:SecOPT} state that $C_{m}^*$ is lower bounded by $\frac{\sum\limits_{j=1}^{n} p_j}{m}$ and by $p_{(2m - 1)} + p_{2m} + p_{(2m + 1)}$. Constraints \eqref{eq:pjsort} indicate that jobs are sorted by non-increasing processing time while constraint \eqref{eq:CaseCond} represents condition $p_{(2m+1)} \geq p_{1} - p_{m}$. Finally, constraints \eqref{eq:xj} and \eqref{eq:alpha_y} indicate that all variables are non negative.\\
\smallskip
By setting w.l.o.g. $OPT = 1$, a feasible solution of model \eqref{eq:ObjCase3bis}--\eqref{eq:alpha_y} for any value of $m$ is:
\begin{align}
 & y = \frac{8m - 7}{3(2m - 1)}; ~ \alpha = \frac{2(m - 1)}{2m - 1}; \nonumber \\
 & p_1 = \frac{5m - 4}{3(2m - 1)}; ~ p_2 = p_3 = \dots = p_{(m-1)} = \frac{4m - 5}{3(2m - 1)}; \nonumber \\
 & p_m = p_{m + 1} = \frac{m - 1}{2m - 1}; ~ p_{m + 2} = p_{m + 3} = \dots = p_{2m + 1} = \frac{1}{3}.  \nonumber 
\end{align}
We can show by strong duality that this solution is optimal for any $m \geq 4$. The dual model with variables $\lambda_i$ $(i= 1,\dots, 3m + 5)$ associated with constraints \eqref{eq:FirstOPT}--\eqref{eq:UBLPTprime} is as follows:
{\small
\begin{align}
   \text{minimize}\quad & m\lambda_1  + \lambda_2 \label{eq:ObjDual2}\\
	\text{subject to}\quad
	& \lambda_1 - \lambda_3 + \lambda_{(2m + 3)}  - \lambda_{(3m + 4)} + \lambda_{(3m + 5)} \geq 0  \label{eq:p1} \\
	& \lambda_1 + \lambda_{(1 + j)} - \lambda_{(2 + j)} + \lambda_{(2m + 2 + j)} \geq 0  \quad j=2, \dots, m - 1 \label{eq:p2-pm} \\
	& \lambda_1 + \lambda_{(m + 1)} - \lambda_{(m +2)} + \lambda_{(3m + 2)} +\lambda_{(3m + 4)}  \geq 0 \label{eq:pm} \\
	& \lambda_1 + \lambda_{(m + 2)} - \lambda_{(m + 3)} + \lambda_{(3m + 2)} +\lambda_{(3m + 5)}  \geq 0 \label{eq:pm+1} \\
& \lambda_1 + \lambda_{(1 + j)} - \lambda_{(2 + j)} + \lambda_{(4m + 3 - j)} \geq 0  \quad j=m+2, \dots, 2m - 2 \label{eq:pm+2-p2m-2} \\
& \lambda_1 + \lambda_{2} + \lambda_{2m} - \lambda_{(2m + 1)} + \lambda_{(2m + 4)} \geq 0  \label{eq:p2m-1} \\
& \lambda_1 + \lambda_{2} + \lambda_{(2m+1)} - \lambda_{(2m+2)} + \lambda_{(2m + 3)} \geq 0  \label{eq:p2m} \\
& \lambda_1 + \lambda_{2} + \lambda_{(2m+2)} + \lambda_{(3m + 3)} + \lambda_{(3m + 4)}  \geq 0  \label{eq:p2m+1} \\
& -\sum\limits_{j=(2m + 3)}^{(3m + 2)}\lambda_j + \lambda_{(3m + 3)} \geq 0  \label{eq:alpha} \\
& - \lambda_{(3m + 3)} - \lambda_{(3m + 5)}  \geq 1  \label{eq:y} \\
	& \lambda_1, \lambda_2, \dots, \lambda_{(2m + 2)} \geq 0  \label{eq:TheDualVar3.1} \\
	& \lambda_{(2m + 3)} , \lambda_{(2m + 4)}, \dots \lambda_{(3m + 5)} \leq 0 \label{eq:TheDualVar3.2}
\end{align}}
Constraints \eqref{eq:p1}--\eqref{eq:y} correspond to primal variables $p_j, \alpha, y$ respectively.
A feasible solution of model \eqref{eq:ObjDual2}--\eqref{eq:TheDualVar3.2} for $m \geq 4$ is:
\begin{align}
 & \lambda_1 = \frac{2}{2m - 1}; ~  \lambda_2 = \frac{2m - 7}{3(2m - 1)}; ~ \lambda_3 =  \lambda_4 = \dots = \lambda_{(m+1)} = 0; \nonumber \\
     & \lambda_{(m+2)} = \frac{1}{2m - 1}; ~ \lambda_{(m+3)} = \lambda_{(m+4)} = \dots = \lambda_{2m} = 0;  \lambda_{(2m + 1)} = \frac{2m - 7}{3(2m - 1)}; \nonumber \\
 & \lambda_{(2m + 2)} = \frac{4(m - 2)}{3(2m - 1)}; ~ \lambda_{(2m+3)} =  0; ~ \lambda_{(2m+4)} = \lambda_{(2m+5)} = \dots = \lambda_{(3m + 1)} = \frac{-2}{2m - 1};  \nonumber \\
  & \lambda_{(3m + 2)} = \frac{-1}{2m - 1}; ~ \lambda_{(3m + 3)} = \frac{3 - 2m}{2m - 1}; ~ \lambda_{(3m + 4)} = 0; ~ \lambda_{(3m + 5)} =  \frac{-2}{2m - 1}.  \nonumber
\end{align}
The corresponding solution value is $m\lambda_1  + \lambda_2 = \frac{8m - 7}{3(2m - 1)} = y$. Hence,  for $m \geq 4$ we have:
$$\frac{\min\{C_{m}^{LPT},C_{m}^{LPT^\prime}\}}{C_{m}^*} \leq \frac{y}{OPT} = \frac{8m - 7}{3(2m - 1)} = \frac{4(2m - 1) - 3}{3(2m - 1)} = \frac{4}{3} - \frac{1}{2m-1}$$
For $m=3$, an optimal solution of model \eqref{eq:ObjCase3bis}--\eqref{eq:alpha_y} has value $y = 1.15385...=\frac{15}{13}$. 
The corresponding approximation ratio is then $\frac{y}{OPT} = \frac{15}{13}$.
\end{proof}
\medskip
\paragraph{\textit{Case 2}: $p_{(2m+1)} < p_{1} - p_{m}$}\mbox{}\\
The processing time of the last job is smaller than the difference $p_{1} - p_{m}$. The following proposition holds.
\begin{proposition}
\label{(2m+1)<1-m}
If $p_{(2m+1)} < p_{1} - p_{m}$, the solution given by $LPT$ has a performance guarantee not superior to $\frac{15}{13}$ for $m=3$ and $\frac{4}{3} - \frac{1}{2m-1}$ for $m \geq 4$.
\end{proposition}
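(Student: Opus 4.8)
\medskip
\noindent\textbf{Proof proposal.}
The plan is to bound the $LPT$ makespan directly by Linear Programming, in the same spirit as Proposition~\ref{First2m-1}, since the two propositions claim exactly the same ratios. In the configuration under consideration machine $M_j$ carries the pair $(p_j,p_{(2m+1-j)})$ for $j=1,\dots,m$ and the last job $2m+1$ is appended to the least loaded machine; as the standing assumption makes $2m+1$ the critical job, the $LPT$ makespan equals $\alpha+p_{(2m+1)}$, where $\alpha=\min_{1\le j\le m}\bigl(p_j+p_{(2m+1-j)}\bigr)$ is the smallest load produced by jobs $1,\dots,2m$. I would therefore set up an LP over the non-negative processing times $p_j$, the load $\alpha$, and an objective variable $y$, with a parameter $OPT>0$, and maximize $y$.

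The constraints closely mirror model \eqref{eq:ObjCase3bis}--\eqref{eq:alpha_y}: the two valid lower bounds $\sum_{j=1}^{2m+1}p_j\le m\,OPT$ and $p_{(2m-1)}+p_{2m}+p_{(2m+1)}\le OPT$ on $C_m^*$, the sorting chain $p_{(j+1)}\le p_j$, the pairing inequalities $\alpha\le p_j+p_{(2m+1-j)}$ for $j=1,\dots,m$ that drive $\alpha$ down to the minimal load, and the linking constraint $y\le\alpha+p_{(2m+1)}$. Only two things change with respect to Proposition~\ref{First2m-1}: the $LPT'$ upper bound \eqref{eq:UBLPTprime} is dropped, because here we certify the value of $LPT$ alone, and the case inequality \eqref{eq:CaseCond} is reversed into $p_1-p_m-p_{(2m+1)}\ge 0$. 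After setting $OPT=1$, the optimal value of this LP is an upper bound on $C_m^{LPT}/C_m^*$.

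For feasibility I would reuse the optimal point of Proposition~\ref{First2m-1}, namely $p_1=\frac{5m-4}{3(2m-1)}$, $p_2=\dots=p_{(m-1)}=\frac{4m-5}{3(2m-1)}$, $p_m=p_{(m+1)}=\frac{m-1}{2m-1}$, $p_{(m+2)}=\dots=p_{(2m+1)}=\frac13$, with $\alpha=\frac{2(m-1)}{2m-1}$ and $y=\frac{8m-7}{3(2m-1)}$. This point is sorted, saturates both lower bounds on $OPT$, meets the reversed case inequality with equality ($p_1-p_m-p_{(2m+1)}=0$), and one checks that $\alpha$ is realized on $M_2,\dots,M_m$ while $L_1=p_1+p_{2m}$ is strictly larger; hence it is feasible and attains $y=\frac{8m-7}{3(2m-1)}=\frac43-\frac1{2m-1}$. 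To prove optimality for $m\ge4$ I would exhibit a matching dual vector, and for $m=3$ I would solve the constant-size LP explicitly, obtaining $y=\frac{15}{13}$.

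The delicate point is the dual certificate. Since the dropped constraint \eqref{eq:UBLPTprime} carried a nonzero multiplier in Proposition~\ref{First2m-1}, that dual is inadmissible here and the dual of the reduced program must be rebuilt from scratch: the multiplier mass formerly on the $LPT'$ bound has to be redistributed over the pairing and sorting constraints while respecting the sign restrictions and matching the objective $m\lambda_1+\lambda_2=\frac{8m-7}{3(2m-1)}$. The guiding intuition is that in Case~2 the first machine, whose load is inflated by the large $p_1$, is not the bottleneck, so at the optimum the minimum load $\alpha$ is governed by $M_2,\dots,M_m$, and the $m-1$ pairing inequalities for $j\ge2$, the sorting chain, and both $OPT$-bounds are simultaneously tight. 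I expect this reconstruction to be the main obstacle, and one must resist bounding $\alpha$ by a single averaging of those $m-1$ loads: that relaxation is too loose once combined with $\sum_j p_j\le m\,OPT$, and the exact value only emerges from the full complementary-slackness structure. Strong duality then closes $m\ge4$, while the direct solve closes $m=3$.
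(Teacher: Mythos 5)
Your proposal matches the paper's proof essentially step for step: the paper likewise evaluates $LPT$ alone via model \eqref{eq:ObjCase3bis}--\eqref{eq:alpha_y} with \eqref{eq:CaseCond} reversed and \eqref{eq:UBLPTprime} dropped, reuses the identical primal point, and closes $m\ge 4$ by strong duality and $m=3$ by a direct solve. The one place you overestimate the work is the dual certificate: rather than rebuilding it from scratch, the paper observes that only three entries of the dual of Proposition \ref{First2m-1} need to change ($\lambda_{(3m+2)}=\frac{-3}{2m-1}$, $\lambda_{(3m+3)}=-1$, $\lambda_{(3m+4)}=\frac{2}{2m-1}$), none of which enters the objective $m\lambda_1+\lambda_2$.
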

\begin{proof}
We consider the worst case LPT performance and notice that it can be evaluated through model \eqref{eq:ObjCase3bis}--\eqref{eq:alpha_y} by simply reversing the inequality sign in constraint \eqref{eq:CaseCond} and disregarding constraint \eqref{eq:UBLPTprime}. Correspondingly, dual model \eqref{eq:ObjDual2}--\eqref{eq:TheDualVar3.2} is still implied with the differences that variable $\lambda_{(3m + 5)}$ is discarded and that we have $\lambda_{(3m + 4)} \geq 0$. The primal solutions turn out to be the same solutions stated in Proposition \ref{First2m-1}. Likewise, dual solutions slightly modify in the following variables entries which do not contribute to the objective function: $\lambda_{(3m + 2)} =  \frac{-3}{2m - 1}; \lambda_{(3m + 3)} = -1;  \lambda_{(3m + 4)} =  \frac{2}{2m - 1}$. Therefore, the approximation ratios stated in Proposition \ref{First2m-1} hold. 
\end{proof}
We can now state the following theorem for \Algo.
\begin{theorem}
\label{OurTheorem}
Algorithm \Algo runs in $\mathcal{O}(n \log n)$ time and has an approximation ratio not superior to $\frac{4}{3} - \frac{1}{3(m-1)}$ for $m \geq 3$.
\end{theorem}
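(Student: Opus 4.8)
The plan is to verify the two claims separately. For the complexity, I would note that a single run of LPT costs $\mathcal{O}(n\log n)$---sorting dominates, while the list-scheduling sweep costs $\mathcal{O}(n\log m)$ with a heap over the machine loads---and that \Algo performs only three such runs (lines 1--3), each preceded by pre-loading a prefix of the already-sorted list onto one machine. Since the sort is computed once and $m\le n$, the total cost stays $\mathcal{O}(n\log n)$. For the ratio, the governing observation is that \Algo outputs $\min\{z_1,z_2,z_3\}\le z_1=C_m^{LPT}$, so any certified bound on any single one of the three schedules transfers to the algorithm. Hence wherever plain LPT already meets the target, nothing further is required: by the summary established after Proposition \ref{LPT2m+1} (assembling Propositions \ref{LPT2m+2}, \ref{mthfo} and \ref{LPT2m}), for every $m\ge3$ and every $n\ne 2m+1$ one has $r^{LPT}\le\frac43-\frac1{3(m-1)}$, which $z_1$ inherits. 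This reduces the whole analysis to $n=2m+1$.

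For $n=2m+1$ I would first strip away the easy configurations. If some job is scheduled after the critical job in the LPT run, Proposition \ref{OtherJobs} already yields the strictly smaller guarantee $\frac43-\frac{7m-4}{3(3m^2+m-1)}$, so I may assume $j'=n$; and if LPT places at least three jobs on a machine before the critical job, Proposition \ref{LPT2m+1} applies verbatim. What remains is the canonical schedule in which LPT pairs jobs $1,\dots,m$ with $2m,\dots,m+1$ and appends job $2m+1$ to the least-loaded machine. I then branch on the sign of $p_{2m+1}-(p_1-p_m)$. In Case 2, $p_{2m+1}<p_1-p_m$, Proposition \ref{(2m+1)<1-m} bounds LPT, hence $z_1$, by $\frac{15}{13}$ for $m=3$ and by $\frac43-\frac1{2m-1}$ for $m\ge4$. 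In Case 1, $p_{2m+1}\ge p_1-p_m$, I split on the location of the $LPT'$ makespan: if it falls on $M_1$, Proposition \ref{Slack7/6} bounds \Algo by $\frac76$; otherwise Proposition \ref{First2m-1} bounds $\min\{z_1,z_2\}$ by $\frac{15}{13}$ for $m=3$ and by $\frac43-\frac1{2m-1}$ for $m\ge4$.

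The closing step---and the only one demanding genuine checking---is to confirm that every case bound lies at or below the target $\frac43-\frac1{3(m-1)}$ for all $m\ge3$. I would record three elementary comparisons: $\frac76=\frac43-\frac16$ equals the target at $m=3$ and is strictly below it for larger $m$; $\frac{15}{13}<\frac76$ disposes of the $m=3$ branches; and $\frac43-\frac1{2m-1}\le\frac43-\frac1{3(m-1)}$ is equivalent to $3(m-1)\ge 2m-1$, i.e. $m\ge2$, which holds on the whole range $m\ge4$. Taking the minimum of the three schedules then pins \Algo's ratio to $\frac43-\frac1{3(m-1)}$. I expect the main---though mild---obstacle to be exactly this reconciliation: several heterogeneous bounds coming out of the separate LP analyses must all be shown to collapse beneath a single target curve, with the tight point being the endpoint $m=3$, where the guarantee degrades to $\frac76$ and coincides with the target.
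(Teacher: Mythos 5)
Your proof is correct and follows essentially the same route as the paper: the paper's own proof of Theorem \ref{OurTheorem} is a one-line assembly of Propositions \ref{Slack7/6}, \ref{First2m-1} and \ref{(2m+1)<1-m} (together with the preceding summary covering the cases $n\neq 2m+1$ via Propositions \ref{LPT2m+2}, \ref{mthfo}, \ref{LPT2m}, and Propositions \ref{OtherJobs} and \ref{LPT2m+1} for the remaining reductions), plus the same sort-once, three-List-Scheduling-passes complexity argument. Your write-up simply makes explicit the case bookkeeping and the final numerical comparisons of $\frac{7}{6}$, $\frac{15}{13}$ and $\frac{4}{3}-\frac{1}{2m-1}$ against the target $\frac{4}{3}-\frac{1}{3(m-1)}$, which the paper leaves as ``it immediately follows.''
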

\begin{proof}
Putting together the results of propositions \ref{Slack7/6}, \ref{First2m-1}, \ref{(2m+1)<1-m}, it immediately follows that \Algo has an approximation ratio not superior to $\frac{4}{3} - \frac{1}{3(m-1)}$ for $m \geq 3$.  
Besides, it is well known that the running time of the LPT heuristic is in $\mathcal{O}(n \log n)$: sorting the jobs by processing time has complexity $\mathcal{O}(n \log n)$ while an efficient implementation of the underlying LS procedure may employ a Fibonacci's heap for extracting the machine with smallest load at each iteration with overall complexity $\mathcal{O}(n \log m)$. Since the proposed algorithm requires to run first LPT (to compute $z_1$) and then LS heuristic twice (to compute $z_2$ and $z_3$) after the job sorting, the resulting time complexity is $\mathcal{O}(n \log n)$. 
\end{proof}
We remark that in the problem variant with two machines $(m=2)$, the approximation ratio of $\frac{4}{3} - \frac{1}{3(m-1)} = 1$ cannot be reached by any polynomial time algorithm unless $\mathcal{P} = \mathcal{NP}$, since $P_2 || C_{max}$ is well known to be NP--Hard. Still, the following analysis shows that \Algo has a better performance guarantee than the bound of $\frac{7}{6}$ implied for LPT when $m=2$.
\subsubsection{\Algo performance analysis for $P_2 || C_{max}$}
We show here that algorithm \Algo has a performance guarantee of $\frac{9}{8}$. We proceed by assuming that the critical job in the LPT solution is the last one $(j^\prime = n)$. Otherwise, we would get an approximation ratio of $\frac{14}{13} < \frac{9}{8}$ according to Proposition \ref{OtherJobs}. Given Lemma \ref{BoChenImplied} and bound \eqref{GenBound}, an approximation bound worse than $\frac{9}{8}$ could be reached only in instances with $n \leq 6$ and where LPT schedules no more than three jobs on each machine.
\\
For $n \leq 4$, $LPT$ will output an optimal solution according to Proposition \ref{LPT2m}. For $n = 6$ , we can evaluate the worst-case LPT performance by model \eqref{eq:1_3m}--\eqref{eq:TheVar_3m} (in the appendix) 
since $n=3m$. The corresponding optimal objective value for $m=2$ is $opt = 0.8888...=\frac{8}{9}$ which gives an approximation ratio equal to $\frac{1}{opt} = \frac{9}{8}$. For $n = 5$, we have the following proposition.
\begin{proposition}
\label{m2n5}
In $P_2 || C_{max}$ instances with $n = 5$, \Algo provides an optimal solution. 
\end{proposition}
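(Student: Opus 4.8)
The plan is to reduce the claim to a finite comparison, using that $n=5=2m+1$ for $m=2$, and then match each candidate optimal assignment to one of the three runs of \Algo. First I would invoke the results already proved. Since $\frac{4}{3}-\frac{1}{3(m-1)}=1$ when $m=2$, Proposition \ref{LPT2m+1} guarantees that if LPT places at least three jobs on some machine before the critical job, then $z_1$ is already optimal, hence so is $\min\{z_1,z_2,z_3\}$. Tracing List Scheduling on the sorted jobs, the only surviving situation is the coupled schedule $M_1:\{1,4\}$, $M_2:\{2,3\}$ with job $5$ sent to the least loaded machine; this situation occurs precisely when $p_1\le p_2+p_3$. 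The degenerate instances with $p_1\ge p_2+p_3+p_4+p_5$, where $C_2^*=p_1$ by Property \eqref{TheOneOpt} and every schedule is optimal, are discarded. In the surviving regime the critical machine carries $k=3$ jobs, so $LPT^\prime=LPT(\{5\})$ and $LPT^{\prime\prime}=LPT(\{3,4,5\})$.

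Next I would list the partitions that the three runs can output and argue that the optimum is always among them. A direct trace gives $z_1\in\{(\{1,4\},\{2,3,5\}),(\{1,4,5\},\{2,3\})\}$ according to whether $p_1+p_4>p_2+p_3$ or not; $z_3\in\{(\{1,2\},\{3,4,5\}),(\{1\},\{2,3,4,5\})\}$ according to whether $p_1\le p_3+p_4+p_5$ or not; and $z_2$ reaches $(\{1,3\},\{2,4,5\})$ when $p_2+p_5>p_1$ and $(\{1,4\},\{2,3,5\})$ or $(\{1\},\{2,3,4,5\})$ otherwise. Collecting these yields a set $\mathcal{P}$ of five partitions. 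I would then show by exchange arguments that, in the regime $p_1\le p_2+p_3$, every partition of the five jobs has makespan at least that of some member of $\mathcal{P}$: moving a job off the heavier machine toward job $1$'s companion can only help, which in particular dominates $(\{1,5\},\{2,3,4\})$ by $(\{1,4\},\{2,3,5\})$ (this is where $p_1\le p_2+p_3$ is used) and dominates every assignment putting three or more jobs with job $1$, or two extra jobs with jobs $1$ and $2$, by an appropriate member of $\mathcal{P}$. Consequently $C_2^*=\min_{P\in\mathcal{P}}\mathrm{mk}(P)$, where $\mathrm{mk}$ denotes the makespan.

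Since every run outputs a member of $\mathcal{P}$, we immediately get $\min\{z_1,z_2,z_3\}\ge C_2^*$, and it remains to prove the reverse inequality, namely that at least one run actually realizes the best partition in $\mathcal{P}$. This matching is the main obstacle and the heart of the proof: for each member of $\mathcal{P}$ I must check that the inequality regime making it optimal coincides with the branch condition under which the corresponding run constructs it. For example, $(\{1,3\},\{2,4,5\})$ can be the strict optimum only if $p_1+p_3\le p_2+p_4+p_5$, and this forces $p_2+p_5>p_1$, which is exactly the branch in which $LPT(\{5\})$ couples jobs $1$ and $3$; conversely, when $p_2+p_5\le p_1$ one shows $(\{1,4\},\{2,3,5\})$ is no worse than $(\{1,3\},\{2,4,5\})$ and is produced by $z_1$ or $z_2$. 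Carrying out these verifications over all branches, together with the tie cases where several members of $\mathcal{P}$ share the optimal value, completes the argument that $\min\{z_1,z_2,z_3\}=C_2^*$.
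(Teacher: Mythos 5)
Your proof is correct and follows essentially the same route as the paper: reduce via Proposition \ref{LPT2m+1} to the coupled LPT schedule $M_1:\{1,4\}$, $M_2:\{2,3\}$, then run a finite case analysis over the candidate optimal partitions and match each to one of the three runs, with the decisive step in both arguments being that $(\{1,3\},\{2,4,5\})$ can only be optimal when $p_1\le p_2+p_5$, which is exactly the branch in which $LPT^\prime$ constructs it. Your enumeration-and-domination organization is a bit more exhaustive than the paper's two-case split on whether jobs $1$ and $2$ share a machine in the optimum, but the content is the same.
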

\begin{proof}
According to Proposition \ref{LPT2m+1}, LPT could not give the optimal makespan only if it assigns jobs 1 and 4 to $M_1$ and jobs 2 and 3 to $M_2$ before assigning the critical job 5. Considering this LPT schedule, we analyze all possible optimal solution configurations and show that \Algo always identifies them. We have the following two cases:
\begin{itemize}
\item \textit{Case 1: jobs 1 and 2 are on the same machine in an optimal solution.}\\
There exists an optimal solution which assigns jobs 3, 4, 5 to $M_1$ and jobs 1, 2 to $M_2$. In fact, any other schedule cannot provide a smaller makespan. The same solution is also provided by $LPT^{\prime\prime}$.
\item \textit{Case 2: jobs 1 and 2 are on different machines in an optimal solution}.\\
If there exists an optimal solution with job 1 on $M_1$ and jobs 2 and 3 on $M_2$, such a solution coincides with the LPT schedule. If instead an optimal solution assigns jobs 1 and 3 to $M_1$, then it must assign jobs 2, 4 and 5 to $M_2$. If $p_3 = p_4$, clearly LPT also gives the optimal makespan. If $p_3 > p_4$, inequality $p_1 \leq p_2 + p_5$ must hold or else the optimal solution would be improved by exchanging job $3$ with job $4$ on the machines giving a contradiction. But then, inequality $p_1 \leq p_2 + p_5$ implies that $LPT^\prime$ solution is also optimal since it assigns jobs 5, 2 and 4 to $M_1$ and jobs 1 and 3 to $M_2$.
\end{itemize}
\end{proof}

\section{From approximation to heuristics: a new LPT--based approach} 
It can be noted that in our theoretical results $LPT^\prime$ was necessary to improve Graham's bound for $m\geq 3$ while $LPT^{\prime\prime}$ was necessary for $m = 2$ only.
Remarkably, for $m \geq 3$ the relevant subcase was the one with $2m + 1$ jobs, $p_{2m+1} \geq p_1-p_m$ and $LPT^\prime$ required to schedule job $2m+1$ first and then to apply List Scheduling first to the sorted job subset $\{1,...,m\}$ and then to the sorted job subset $\{m+1,...,2m\}$. 
This suggests a general greedy approach that considers not only the ordering of the jobs but also the differences in processing time within job subsets of size $m$. We propose a constructive procedure that splits the sorted jobset in tuples of $m$ consecutive jobs ($1,\dots,m; m+1,\dots,2m;$ etc.) and sorts the tuples in non-increasing order of the difference between the largest job and the smallest job in the tuple. Then, List Scheduling is applied to the set of sorted tuples. 
We denote this approach as \Slack.
\begin{algorithm}[H] 
\begin{algorithmic}[1]
\STATEx \textbf{Input:} $P_m || C_{max}$ instance with $m$ machines and $n$ jobs with processing times $p_j$ $(j = 1,\dots,n)$.
\STATE Sort jobs by non-increasing $p_j$.
\STATE Consider $\Bigl\lceil \frac{n}{m} \Bigr\rceil $ tuples with size $m$ given by jobs $1,\dots,m;m+1,\dots, 2m$, etc.. If $n$ is not multiple of $m$, add dummy jobs with null processing time in the last tuple.
\STATE For each tuple,  compute the associated slack, namely $p_1 - p_m, p_{(m+1)} - p_{2m}, \dots, p_{(n-m+1)} - p_{n}$.
\STATE Sort tuples by non-increasing slack and then fill a list with consecutive jobs in the sorted tuples. 
\STATE Apply List Scheduling to this job ordering and return the solution.
\end{algorithmic}
\caption{\textbf{\Slack heuristic}}
\end{algorithm}	
In terms of computational complexity, since construction and sorting of the tuples can be performed in $\mathcal{O}(\Bigl\lceil \frac{n}{m} \Bigr\rceil  \log \Bigl\lceil \frac{n}{m} \Bigr\rceil )$, the running time of \Slack is $\mathcal{O}(n \log n)$ due to the initial sorting of the jobs. \\
We compare \Slack to LPT on benchmark literature instances provided in \cite{Iori2008}.
All tests have been conducted on an Intel i5 CPU @ 2.30 GHz with 8 GB of RAM. Both algorithms have been implemented in C++.\\
In \cite{Iori2008} two classical classes of instances from literature are considered: \textit{uniform instances} proposed in \cite{FraGeLaMu94}; \textit{non-uniform instances} proposed in \cite{FraNeSc04}. In \textit{uniform instances}  the processing times are integer uniformly distributed in the range $[a, b]$. In \textit{non-uniform instances}, $98\%$ of the processing times are integer uniformly distributed in $[0.9(b-a), b]$ while the remaining ones are uniformly distributed in $[a, 0.2(b-a)]$. For both classes, we have $a = 1; b = 100, 1000, 10000$. For each class, the following values were considered for the number of machines and jobs: $m = 5, 10, 25$ and $n = 10, 50, 100, 500, 1000$.
For each pair $(m,n)$ with $m < n$, 10 instances were generated for a total of 780 instances.\\
We compared the performance of the algorithms by counting how many times \Slack improves (resp. worsens) the solution value provided by LPT 
or yields the same makespan. 
The results are reported in Table \ref{tab:LitInsLPT} where instances are aggregated by processing time range and number of machines as in \cite{Iori2008}. Running times of the heuristics are not reported since are negligible.
\begin{table}[H]
	\centering
	\small
	\scalebox{0.85}{
\begin{tabular}{|ll|c|*{4}{c|}*{2}{c|}*{2}{c|}*{2}{c|} }
  \hline 
&  & NON UNIFORM& \multicolumn{2}{|c|}{\Slack } & \multicolumn{2}{|c|}{}& \multicolumn{2}{|c|}{LPT}\\ 
 &  & Instances & \multicolumn{2}{|c|}{wins} & \multicolumn{2}{|c|}{draws}& \multicolumn{2}{|c|}{wins}\\	\hline
 $[a, b]$ &	$m$  &  \# &  \# & (\%) &  \# & (\%) &  \# & (\%) \\	\hline	
        & 5  & 50 & 31 & (62.0)  & 16 & (32.0) & 3 & (6.0) \\
1-100   & 10 & 40 & 32 & (80.0)  & 8  & (20.0) & 0 & (0.0) \\
        & 25 & 40 & 23 & (57.5)  & 17 & (42.5) & 0 & (0.0) \\ \hline
        & 5  & 50 & 39 & (78.0)  & 10 & (20.0) & 1 & (2.0) \\
1-1000  & 10 & 40 & 40 & (100.0) & 0  & (0.0)  & 0 & (0.0) \\
        & 25 & 40 &27 & (67.5)  & 12 & (30.0) & 1 & (2.5) \\ \hline
        & 5  & 50 &39 & (78.0)  & 10 & (20.0) & 1 & (2.0) \\
1-10000 & 10 & 40 & 40 & (100.0) & 0  & (0.0)  & 0 & (0.0) \\
        & 25 & 40 & 28 & (70.0)  & 10 & (25.0) & 2 & (5.0) \\ \hline \hline
 &  & UNIFORM & \multicolumn{2}{|c|}{\Slack } & \multicolumn{2}{|c|}{}& \multicolumn{2}{|c|}{LPT}\\ 
 &  & Instances & \multicolumn{2}{|c|}{wins} & \multicolumn{2}{|c|}{draws}& \multicolumn{2}{|c|}{wins}\\	\hline
 $[a, b]$ &	$m$  & \# & \# & (\%) &  \# & (\%) &  \# & (\%) \\	\hline	
        & 5  & 50 & 12 & (24.0) & 37 & (74.0) & 1 & (2.0)  \\
1-100   & 10 & 40 & 14 & (35.0) & 20 & (50.0) & 6 & (15.0) \\
        & 25 & 40 & 10 & (25.0) & 29 & (72.5) & 1 & (2.5)  \\ \hline
        & 5  & 50 & 32 & (64.0) & 15 & (30.0) & 3 & (6.0)  \\
1-1000  & 10 & 40 & 27 & (67.5) & 5  & (12.5) & 8 & (20.0) \\
        & 25 & 40 & 24 & (60.0) & 12 & (30.0) & 4 & (10.0) \\ \hline
        & 5  & 50 & 36 & (72.0) & 12 & (24.0) & 2 & (4.0)  \\
1-10000 & 10 & 40 &  37 & (92.5) & 0  & (0.0)  & 3 & (7.5)  \\
        & 25 & 40 & 22 & (55.0) & 11 & (27.5) & 7 & (17.5)	\\	 \hline	
\end{tabular}}
		\caption{\Slack vs LPT: performance comparison on $P_m || C_{max}$ \text{instances} from \cite{Iori2008}.}
		\label{tab:LitInsLPT}
\end{table}
\Slack algorithm strongly outperforms LPT rule in each instance category with the most impressive performance difference on non uniform instances. Overall, on 780 benchmark literature instances,  \Slack wins 513 times (65.8\% of the cases) against LPT, ties 224 times (28.7\%) and loses 43 times (5.5\%) only. Given these results, \Slack heuristic can be regarded as a valuable alternative to the popular LPT rule. \\
To get a broader picture, we also compared \Slack against COMBINE, an algorithm proposed in \cite{LeeMa88}, that couples LPT with the MULTIFIT heuristic introduced in \cite{CofGaJo78}. With an increase of the computational effort, COMBINE generally exhibits better performances than LPT (see, e.g., \cite{LeeMa88}). The results of the performance comparison between \Slack and COMBINE on the considered benchmark instances are reported in Table \ref{tab:LitInsCombine}.
\begin{table}[H]
	\centering
	\small
	\scalebox{0.85}{
\begin{tabular}{|ll|c|*{4}{c|}*{2}{c|}*{2}{c|}*{2}{c|} }
  \hline 
&  & NON UNIFORM& \multicolumn{2}{|c|}{\Slack } & \multicolumn{2}{|c|}{}& \multicolumn{2}{|c|}{COMBINE}\\ 
 &  & Instances & \multicolumn{2}{|c|}{wins} & \multicolumn{2}{|c|}{draws}& \multicolumn{2}{|c|}{wins}\\	\hline
 $[a, b]$ &	$m$  &  \# &  \# & (\%) &  \# & (\%) &  \# & (\%) \\	\hline	
         & 5  & 50 & 30 & (60.0) & 16 & (32.0) & 4 & (8.0)  \\
1-100   & 10 & 40 & 29 & (72.5) & 8  & (20.0) & 3 & (7.5)  \\
        & 25 & 40 & 23 & (57.5) & 17 & (42.5) & 0 & (0.0)  \\  \hline
        & 5  & 50& 39 & (78.0) & 10 & (20.0) & 1 & (2.0)  \\
1-1000  & 10 & 40& 33 & (82.5) & 0  & (0.0)  & 7 & (17.5) \\
        & 25 & 40& 27 & (67.5) & 12 & (30.0) & 1 & (2.5)  \\  \hline
        & 5  & 50& 39 & (78.0) & 10 & (20.0) & 1 & (2.0)  \\
1-10000 & 10 & 40& 32 & (80.0) & 0  & (0.0)  & 8 & (20.0) \\
        & 25 & 40& 28 & (70.0) & 10 & (25.0) & 2 & (5.0) \\  \hline  \hline
 &  & UNIFORM & \multicolumn{2}{|c|}{\Slack } & \multicolumn{2}{|c|}{}& \multicolumn{2}{|c|}{COMBINE}\\ 
 &  &  Instances & \multicolumn{2}{|c|}{wins} & \multicolumn{2}{|c|}{draws}& \multicolumn{2}{|c|}{wins}\\	\hline
 $[a, b]$ &	$m$  & \# & \# & (\%) &  \# & (\%) &  \# & (\%) \\	\hline	
        & 5  & 50 & 10 & (20.0) & 38 & (76.0) & 2  & (4.0)  \\
1-100   & 10 & 40 & 9  & (22.5) & 21 & (52.5) & 10 & (25.0) \\
        & 25 & 40 & 4  & (10.0) & 23 & (57.5) & 13 & (32.5) \\  \hline
        & 5  & 50 & 31 & (62.0) & 15 & (30.0) & 4  & (8.0)  \\
1-1000  & 10 & 40 & 21 & (52.5) & 5  & (12.5) & 14 & (35.0) \\
        & 25 & 40 & 17 & (42.5) & 6  & (15.0) & 17 & (42.5) \\  \hline
        & 5  & 50 & 36 & (72.0) & 11 & (22.0) & 3  & (6.0)  \\
1-10000 & 10 & 40 & 30 & (75.0) & 0  & (0.0)  & 10 & (25.0) \\
        & 25 & 40 & 15 & (37.5) & 6  & (15.0) & 19 & (47.5) \\  \hline
\end{tabular}}
		\caption{\Slack vs COMBINE: performance comparison on $P_m || C_{max}$ \text{instances} from \cite{Iori2008}.}
		\label{tab:LitInsCombine}
\end{table}
The results in Table \ref{tab:LitInsCombine} provide additional evidence on the effectiveness of \Slack, which outperforms COMBINE on non uniform instances and favorably compares to the competing algorithm on uniform instances. Out of 780 instances, \Slack wins against COMBINE 453 times (58.1\% of the cases), ties 208 times (26.7\%) and loses 119 times (15.3\%).

\section{Concluding remarks}
We provided new insights on the well-known LPT rule for $P_m || C_{max}$ problem and proposed a slight algorithmic revisiting which improves previously published approximation ratios for LPT. As second major contribution, from our approximation results we came up with a simple heuristic which strongly outperforms LPT on a large set of benchmark literature instances. \\    
In our analysis of $P_m || C_{max}$, we deployed a novel approach which relies on Linear Programming. The proposed LP reasoning could be considered a valid alternative to techniques based on analytical derivation and may as well find application in other combinatorial optimization problems. For example, an attempt in this direction has been recently proposed in \cite{DEPFSC17} for a multiperiod variant of the knapsack problem.\\
We remark that in this work we did not derive tight approximation bounds for \Algo algorithm. We reasonably expect that improved bounds can be stated and leave this issue to future research. Nonetheless, we found out $P_m || C_{max}$ instances for $m \geq 3$ which provide a lower bound on the worst case performance ratio of \Algo equal to $\frac{4}{3} - \frac{7}{3(3m + 1)}$. These instances have $2m+2$ jobs with processing times:
\begin{align}
& p_j = 2m - \left\lfloor \frac{j + 1}{2} \right\rfloor, \quad 1 \leq j \leq 2m - 2; \nonumber \\
& p_j = m, \quad 2m - 1 \leq j \leq 2m + 2 \nonumber
\end{align}
It is easy to check that $C_{m}^{ \Algo} = 4m - 1$ and $C_{m}^{*} = 3m + 1$ and that such values give the above performance ratio.

\section{Appendix: Proof of Proposition \ref{mthfo}}
\label{Apx1} 
\begin{proof}
The relevant cases involve instances with $2m + 2 \leq n \leq 3m$, where LPT schedules the critical job in third position on a machine. Also, notice that each non critical machine must process at most three jobs to contradict the claim, otherwise the results of Lemma \ref{BoChenImplied} holds for $k \geq 4$. In addition, an optimal solution must assign  at most three jobs to each machine. Otherwise, we would have $ C_{m}^* \geq \sum\limits_{j=n-3}^{n} p_j \implies p_n \leq \frac{C_{m}^*}{4}$  which induces a $\left(\frac{5}{4} - \frac{1}{4m}\right)$--performance guarantee according to expression (\ref{GenRel}) (with $j^\prime = n$).\\
Considering the above requirements for both LPT schedule and the optimal solution, we introduce different LP formulations which consider appropriate bounds on the completion times of the machines as well as on the optimal makespan according to the number of jobs and machines involved. We analyze two macro-cases defined by the two different LP modelings.  
\begin{itemize}
\item Case a): $n = 3m ~(m=3 ~ n=9; m=4 ~ n=12);$ 
\item Case b): $2m+2 \leq n \leq 3m - 1  ~(m=3 ~ n=8; m=4 ~ n=10,11).$
\end{itemize}
\subsection{Case a)}
Since $n=3m$, an optimal solution must process exactly three jobs on each machine  to contradict the claim. This implies a lower bound on the optimum equal to $p_1 + p_{(3m-1)} + p_{3m}$ as well as that condition $p_1 \leq 2p_{3m}$ must hold (otherwise $p_{n} \leq \frac{C_{m}^*}{4}$). Given the last condition, LPT couples jobs $1,2,\dots,m$ respectively with jobs $2m,(2m-1),\dots,(m+1)$ on the machines. A valid upper bound on $C_{m}^{LPT}$ is hence given by $p_1 + p_{(m+1)} + p_{3m}$. In order to evaluate the worst case LPT performance, we introduce a simple LP formulation with non-negative variables $p_j$ ($j=1,\dots,n$) related to job processing times and variable $opt$ which again represents $C_{m}^*$. As in model \eqref{eq:ObjOpt}--\eqref{eq:TheVar}, we minimize the value of $C_{m}^*$ after setting w.l.o.g. $C_{m}^{LPT}=1$. The following LP model is implied: 
{\small
\begin{align}
   \text{minimize}\quad & opt \label{eq:1_3m}\\
	\text{subject to}\quad
	& \sum\limits_{j=1}^{3m} p_j \leq m\cdot opt  \label{eq:2_3m}\\
	& p_1 + p_{(3m -1)} + p_{3m} - opt \leq 0  \label{eq:3_3m}\\
	& p_1 - 2p_{3m} \leq 0  \label{eq:4_3m}\\
    & p_1 + p_{(m + 1)} + p_{3m} \geq 1 \label{eq:heur_3m} \\
    & p_{(j + 1)} - p_j \leq 0 \quad j=1, \dots, 3m - 1; \label{eq:pjsort_3m}\\
    & p_j \geq 0 \quad j=1, \dots, 3m; \label{eq:TheVar_3m}
\end{align}}
Constraints \eqref{eq:2_3m}--\eqref{eq:TheVar_3m} represent the above specified conditions together with the sorting of the jobs by decreasing processing time (constraint \eqref{eq:pjsort_3m}).
Solving model (\ref{eq:1_3m})--(\ref{eq:TheVar_3m}) by means of an LP solver (e.g. CPLEX) provides the optimal solution value of $opt$ for any value of $m$. More precisely, we have $opt=0.8571\dots=\frac{6}{7}$ for $m=3$ and $opt=0.8421\dots=\frac{16}{19}$ for $m=4$. The claim is showed by the corresponding upper bounds $\frac{1}{opt}$ on the ratio $\frac{C_{m}^{LPT}}{C_{m}^*}$ which are equal to $\frac{7}{6}$ for $m=3$ and to $\frac{19}{16} (< \frac{11}{9})$ for $m=4$.
\subsection{Case b)}
As said above, LPT assigns three jobs to the critical machine. The possible values of $n$ and $m$ also imply that there is a non critical machine in the LPT solution which schedules three jobs. As in \textit{Case a)}, we introduce an LP formulation which reconsiders model \eqref{eq:ObjOpt}--\eqref{eq:TheVar}. For the target non critical machine loading three jobs, we distinguish the processing time of the last job assigned to the machine, represented by a non negative variable $p^{\prime}$, from the contribution of the other two jobs, represented by non negative variable $t^{\prime}$.  Variables $t_{c},t^{\prime\prime}, opt$ have the same meaning as in model \eqref{eq:ObjOpt}--\eqref{eq:TheVar}. \\
First notice that $p^{\prime} \geq p_{(n-1)}$ holds as the critical job is $n$ and that neither job $n-1$ nor job $n$ can contribute to the value of $t^{\prime}$. Also, in any LPT schedule both $t_{c}$ and $t^{\prime}$ are given by the sum of two jobs where the first job is between job 1 and job $m$. The following relations straightforwardly hold: 
$$t_{c} \leq p_1 + p_{(m+1)}; ~ t^{\prime} \geq p_m + p_{(n-2)}$$ 
Considering these conditions and the reasoning applied to model \eqref{eq:ObjOpt}--\eqref{eq:TheVar}, we get the following LP model: 
{\small
\begin{align}
   \text{minimize}\quad & opt \label{eq:ObjOptB}\\
	\text{subject to}\quad
	& \sum\limits_{j=1}^{n} p_j - m\cdot opt \leq 0  \label{eq:BoundOptB}\\
     & (t_{c}  + p_n) + (t^{\prime} + p^{\prime}) +  t^{\prime\prime}  - \sum\limits_{j=1}^{n} p_j  = 0 \label{eq:sumpB}\\	
	& t_{c} - (t^{\prime} + p^{\prime}) \leq 0 \label{eq:L1t1B}\\
    & (m-2)t_{c}  - t^{\prime\prime}  \leq 0  \label{eq:avL3B}\\
    & t_{c} + p_n = 1 \label{eq:heurB} \\
    & p_{(n-1)} - p^{\prime}  \leq 0 \label{eq:primeB} \\
    & p_m + p_{(n-2)} - t^{\prime}  \leq 0 \label{eq:2t1B}\\
	& t_{c} - (p_1 + p_{(m+1)}) \leq 0 \label{eq:2L1B}\\
	& p_{(j + 1)} - p_j \leq 0 \quad j=1, \dots, n - 1; \label{eq:pjsort_B}\\
    & p_j \geq 0 \quad j=1, \dots, n; \label{eq:TheVar_B}\\
	& t_{c}, t^{\prime}, t^{\prime\prime}, p^{\prime}, opt \geq 0  \label{eq:TheVarB}
\end{align}}
Model \eqref{eq:ObjOptB}--\eqref{eq:TheVarB} constitutes a backbone LP formulation for all subcases analyzed in the following.
\subsubsection{$P_m || C_{max}$ instances with $m=4$, $n=11$}
To contradict the claim, an optimal solution must assign three jobs to three machines and two jobs to one machine. Assume first that the optimal solution processes the first three jobs on two machines. Since the optimal solution must schedule at least five jobs on two machines, a valid lower bound on $C_{m}^*$ is equal to one half of the sum $(p_1 + p_2 + p_3 + p_{10} + p_{11})$. Thus, we can add constraint 
$$ opt \geq \frac{p_1 + p_2 + p_3 + p_{10} + p_{11}}{2}$$
to model \eqref{eq:ObjOptB}--\eqref{eq:TheVarB}. The corresponding LP optimal solution gives an upper bound on ratio $\frac{C_{m}^{LPT}}{C_{m}^*}$ equal to $\frac{1}{opt} = \frac{11}{9}$.\\
Likewise, should the optimal solution schedule the first three jobs on three different machines, a valid lower bound on $C_{m}^*$ would correspond to the following constraint
$$ opt \geq \frac{p_1 + p_2 + p_3 + p_7 + p_8 + p_9 + p_{10} + p_{11}}{3}$$
since the optimal solution has to process at least eight jobs on three machines. If we add the last constraint to model \eqref{eq:ObjOptB}--\eqref{eq:TheVarB}, the LP optimal solution yields again a value of $\frac{1}{opt}$ equal to $\frac{11}{9}$.

\subsubsection{$P_m || C_{max}$ instances with $m=4$, $n=10$}
An optimal solution either assigns three jobs to three machines and one job to the other machine, or three jobs to two machines and two jobs to the others. We again distinguish whether the optimal solution processes the first three jobs either on two or three machines. 
In the first case, since two machines must process at least four jobs in an optimal solution, a valid lower bound on $C_{m}^*$ is equal to one half of the sum $(p_1 + p_2 + p_3 + p_{10})$. 
The optimal objective value of model \eqref{eq:ObjOptB}--\eqref{eq:TheVarB} after adding constraint 
$$ opt \geq \frac{p_1 + p_2 + p_3 + p_{10}}{2}$$ provides an approximation ratio equal to $\frac{1}{opt} = \frac{11}{9}$.
In the second case, as three machines must necessarily process at least seven jobs, we can add to the LP model the following constraint which represents a valid lower bound on $C_{m}^*$: 
$$ opt \geq \frac{p_1 + p_2 + p_3 + p_7 + p_8 + p_9 + p_{10}}{3}$$
From the corresponding LP optimal solution we get again $\frac{1}{opt}=\frac{11}{9}$.
\subsubsection{$P_m || C_{max}$ instances with $m=3$, $n=8$}
We have to address the case where an optimal solution assigns two jobs to a machine and three jobs to the other two machines. This implies relations $C_{m}^* \geq p_1 + p_8$ and $C_{m}^* \geq \frac{p_1 + p_2 + p_6 + p_7 + p_8}{2}$ and the corresponding constraints to be added to model \eqref{eq:ObjOptB}--\eqref{eq:TheVarB}:  $$ opt \geq p_1 + p_8; ~ opt \geq \frac{p_1 + p_2 + p_6 + p_7 + p_8}{2}$$
For the solution provided by LPT, we now analyze all possible assignments of the jobs which can give $t^{\prime}$. LPT separately schedules the first three jobs and job 4 with job 3 on the third machine. It immediately follows that if the target non critical machine is the first one, then it must necessarily process job 6 as second job, i.e. $t^{\prime} = p_1 + p_6$. Instead, if  the non critical machine is the second machine, then $t^{\prime}$ must be contributed by $p_2$ and $p_5$, i.e. $t^{\prime} = p_2 + p_5$. To show this, first notice that the assignment $t^{\prime} = p_2 + p_6$, which also implies $p^{\prime} = p_7$ as well as having job 5 with job 3 and 4, cannot occur, as it would prevent the critical machine from scheduling three jobs. Also, job 7 cannot be scheduled as second job on such non critical machine. 
Eventually, a third possibility is to have $t^{\prime} = p_3 + p_4$. Solving model \eqref{eq:ObjOptB}--\eqref{eq:TheVarB} with the constraints related to the three possible assignments of $t^{\prime}$ provides the following upper bounds on ratio $\frac{C_{m}^{LPT}}{C_{m}^*}$:
\begin{align}
& t^{\prime} = p_1 + p_6 \implies \frac{1}{opt} = \frac{15}{13} \left(< \frac{7}{6} \right); \nonumber \\
& t^{\prime} = p_2 + p_5  \implies \frac{1}{opt} = \frac{7}{6}; \nonumber \\
& t^{\prime} = p_3 + p_4  \implies \frac{1}{opt} = \frac{7}{6}. \nonumber
\end{align}
Putting together all the stated results, we get the claim.
\end{proof}

\bibliographystyle{elsarticle-harv} 
 \bibliography{ref88}
 \label{sec:bib}
 
\end{document}